\let\thm@indent\indent}{\let\thm@indent\noindent}%
  {}{}
\newcommand{\be}{\begin{equation}}
\newcommand{\ee}{\end{equation}}
\newcommand{\bea}{\begin{eqnarray}}
\newcommand{\eea}{\end{eqnarray}}
\newtheorem{thm}{Theorem}
\newtheorem{prop}{Proposition}
\newtheorem{lemma}{Lemma}
\newtheorem{hyp}{Hypothesis}
\newtheorem{quest}{Question}
\theoremstyle{definition}
\newtheorem{rem}{Remark}
\newcommand{\ve}{{\varepsilon}}
\newcommand{\rmd}{{\rm d}}
\newcommand{\bx}{{ {x} }}
\newcommand{\cV}{{ \mathcal{V}}}
\newcommand{\bu}{{{u}}}
\newcommand{\Dlim}{{\mathcal{D}'}\mbox{-}\lim}
\definecolor{ao(english)}{rgb}{0.0, 0.5, 0.0}
\providecommand{\customgenericname}{}
\newcommand{\newcustomtheorem}[2]{%
  \newenvironment{#1}[1]
  {%
   \renewcommand\customgenericname{#2}%
   \renewcommand\theinnercustomgeneric{##1}%
   \innercustomgeneric
  }
  {\endinnercustomgeneric}
}
\newcommand{\bq}{\begin{equation}}
\newcommand{\eq}{\end{equation}}
\newcommand{\bqa}{\begin{eqnarray*}}
\newcommand{\eqa}{\end{eqnarray*}}
\newcommand{\T}{\mathbb{T} }
\title[Self--Regularization in turbulence]{Self--Regularization in turbulence from \\  the Kolmogorov 4/5--Law and Alignment }
\author{Theodore D. Drivas}
\address{ Department of Mathematics, Stony Brook University,
Stony Brook, NY, 11794}
\email{tdrivas@math.stonybrook.edu}
\date{today}
\begin{document}

\begin{abstract}
A defining feature of 3D hydrodynamic turbulence is that the rate of energy dissipation is bounded away from zero as viscosity is decreased (Reynolds number  increased).  This phenomenon -- anomalous dissipation --  is sometimes called the `zeroth law of turbulence' as it underpins many celebrated theoretical predictions.  Another robust feature observed in turbulence is that velocity structure functions $S_p(\ell) :=\langle |\delta_\ell u|^p\rangle$ exhibit persistent power-law scaling in the inertial range, namely $S_p(\ell) \sim |\ell|^{\zeta_p}$ for exponents $\zeta_p>0$ over an ever increasing (with Reynolds) range of scales.
This behavior indicates that the velocity field retains some fractional differentiability uniformly in the Reynolds number.
 The Kolmogorov 1941 theory of turbulence predicts that $\zeta_p=p/3$ for all $p$ and Onsager's 1949 theory establishes the requirement that $\zeta_p\leq p/3$ for $p\geq 3$ for consistency with the zeroth law.
Empirically,  $\zeta_2 \gtrapprox 2/3$  and $\zeta_3 \lessapprox 1$,  suggesting that turbulent Navier-Stokes solutions approximate dissipative weak solutions of the Euler equations possessing (nearly) the minimal degree of singularity required to sustain anomalous dissipation. 
In this note, we adopt an experimentally supported hypothesis on the anti-alignment of velocity increments with their separation vectors and demonstrate  that  the inertial dissipation provides a regularization mechanism via the Kolmogorov 4/5--law.  
\end{abstract}

\maketitle

\vspace{-5mm}

\section{Introduction}

We consider spatially periodic, incompressible viscous fluids  governed by the Navier-Stokes equations 
\bea\label{NSE}
\partial_t \bu^\nu + \nabla \cdot (\bu^\nu\otimes \bu^\nu) \!\! &=& \!\! -\nabla p^\nu + \nu \Delta \bu^\nu + f^\nu,\\ \label{incom}
\nabla \cdot u^\nu \!\!&=& \!\!0,
\eea
with solenoidal initial data
$u^\nu|_{t=0}=u_0^\nu\in  L^2({\mathbb T}^d)$ and body forcing $f^\nu \in L^2(0,T;L^2(\mathbb{T}^d))$. The parameter $\nu>0$ is the kinematic viscosity of the fluid. Upon nondimensionalization, it is replaced by the inverse Reynolds number $\mathsf{Re}^{-1} = \nu/\mathsf{ U}\mathsf{ L}$, where $\mathsf{ U}$ is a characteristic velocity and $\mathsf{ L}$ a characteristic length.
  If equations \eqref{NSE} and \eqref{incom} are understood as holding in the sense of distributions on $[0,T]\times \mathbb{T}^d$, then solutions of class $L^\infty(0,T;L^2({\mathbb T}^d))\cap L^2(0,T;H^1({\mathbb T}^d))$  known as Leray solutions \cite{L34}, exist for all time $T>0$ but are not known to be unique.
A fundamental property of these solutions is that they satisfy a global energy inequality.  This means 
that energy dissipation due to the viscosity of the fluid cannot exceed the difference in initial and final kinetic energies plus the energy input by forcing.  This inequality can be restated as an equality  by accounting for the dissipation arising from an inertial cascade to small scales caused by (hypothetical) singularities in the Leray weak solutions \cite{DR00}:
\bea\label{viscousDiss}
\int_0^T\int_{\mathbb{T}^d} \! \varepsilon^\nu[\bu^\nu] \  \rmd \bx \rmd t
= \frac{1}{2}\int_{\mathbb{T}^d} \! | \bu_0^\nu|^2\rmd \bx
-\frac{1}{2}\int_{\mathbb{T}^d} \! |\bu^\nu(\cdot,T)|^2\rmd \bx
+\int_0^T\int_{\mathbb{T}^d} \! \bu^\nu\cdot f^\nu\  \rmd \bx \rmd t,
\eea
for almost every $T\geq 0$, where the total energy dissipation rate is
\be\label{epsDef}
\varepsilon^\nu[\bu^\nu] := \nu |\nabla \bu^\nu|^2+ D[\bu^\nu].
\ee
 The  dissipation due to possible singularities, $D[\bu^\nu]$, is a non-negative distribution (Radon measure).  A consequence of \eqref{viscousDiss} is that the cumulative energy dissipation $\varepsilon[\bu^\nu]$ is bounded by norms of data and forcing.

A striking feature of high-$Re$ turbulence is that energy dissipation does not vanish in the limit of viscosity going to zero. Namely, that there exists a number $\ve>0$ independent of viscosity $\nu$ such that
\be\label{zerothLaw}
\int_0^T\int_{\mathbb{T}^d} \varepsilon^\nu[\bu^\nu] \rmd x \rmd t  \geq {\ve}>0.
\ee
See, e.g. \cite{BO95,TBS02,KRS98,KIYIU03,KRS84,PKW02}.
This phenomenon, known as anomalous dissipation, is so fundamental to our modern understanding of turbulence that it has been termed the ``zeroth law" \cite{F95}.   It should be emphasized however that, to this day,  no single mathematical example of \eqref{zerothLaw} is available, although there has been great progress in understanding similar behavior in some model problems such as 1D conservation laws and compressible flows \cite{KMS00,Dshock21,ED15, DE18},  shell models \cite{CFP09,MSV07,FGV16,AM16b},
and passive scalars \cite{BGK98,LJR02,DEIJ19,BBPS19}.  

Despite its conjectural status from the point of view of mathematics, under the experimentally corroborated assumption that behavior \eqref{zerothLaw} occurs together with some heuristic assumptions on statistical properties (homogeneity, isotropy, monofractal scaling), Kolmogorov \cite{K41} made a remarkable prediction about the structure of turbulent velocity fields at high Reynolds number, namely that
\be\label{SFscaling}
S_p^\|(\ell):= \langle (\delta_\ell u^\nu\cdot \hat{\ell} )^p\rangle  \sim (\ve |\ell|)^{p/3} \qquad \text{for} \qquad \ell_\nu \ll \ell \ll L
\ee
where $\delta_\ell u^\nu(x,t):= u^\nu(x+\ell,t)-u^\nu(x,t)$, $\hat{\ell}= \ell/|\ell|$  and where $\langle\cdot \rangle$ represents some suitable combination of  space, time and ensemble averages. 
The length $\ell_\nu$, known as the Kolmogorov scale, represents a small-scale dissipative cutoff and the integral scale $L$ represents the size of the largest eddy in the flow.  The range $\ell_\nu \ll \ell \ll L$ over which the scaling \eqref{SFscaling} holds is known as the \emph{inertial range}.   The objects $S_p^\|(\ell)$ are called $p$th-order longitudinal structure functions since they measure the (signed) variation of $p$th powers of the velocity increments in the direction of their separation vectors.
See Figure \ref{figure1} for evidence of such persistent inertial-range scaling from numerical simulations of  homogenous isotropic Navier-Stokes turbulence \cite{ISY20}. 

\begin{figure} [h!]
\includegraphics[width=0.47\linewidth]{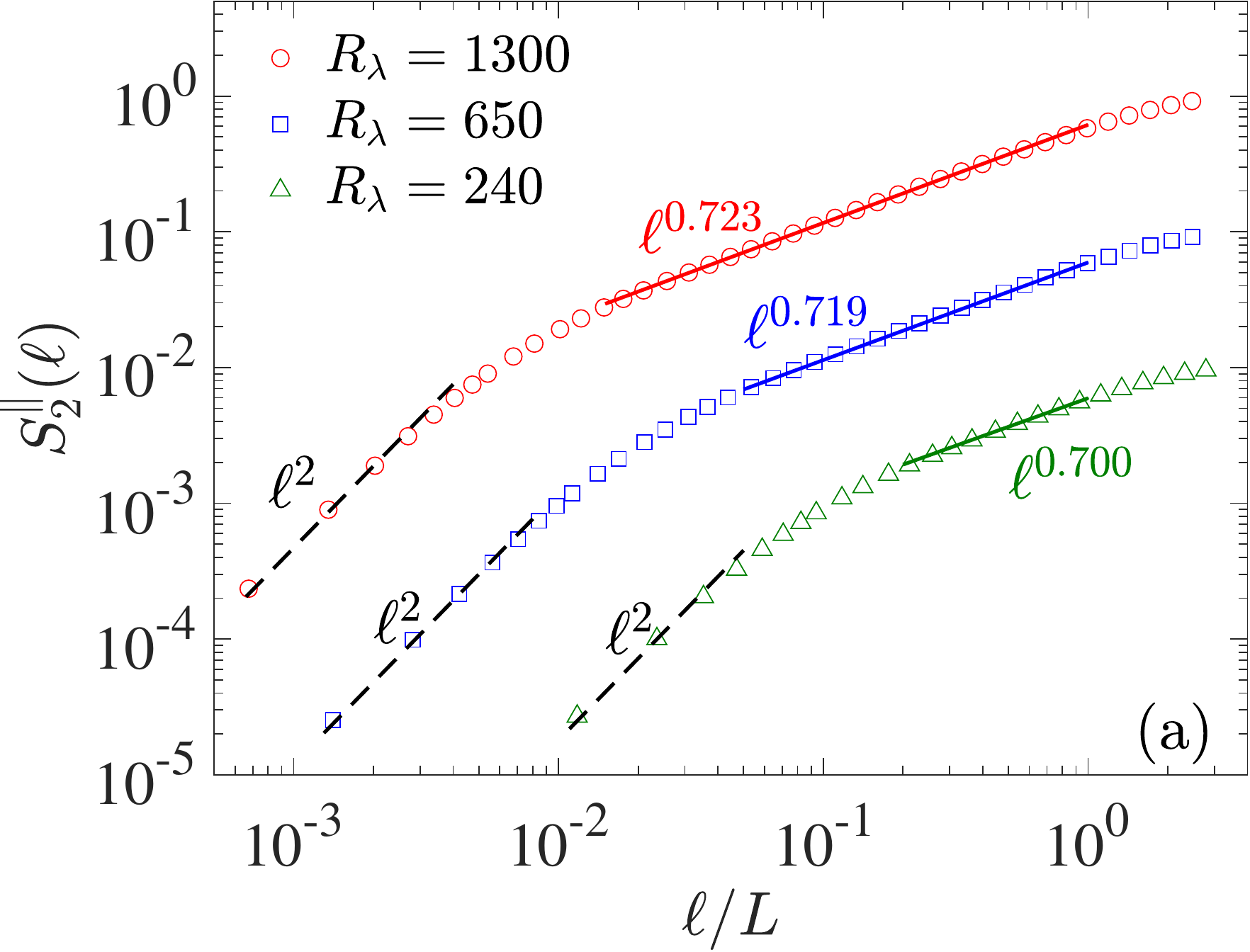}  \hspace{3mm}
\includegraphics[width=0.47\linewidth]{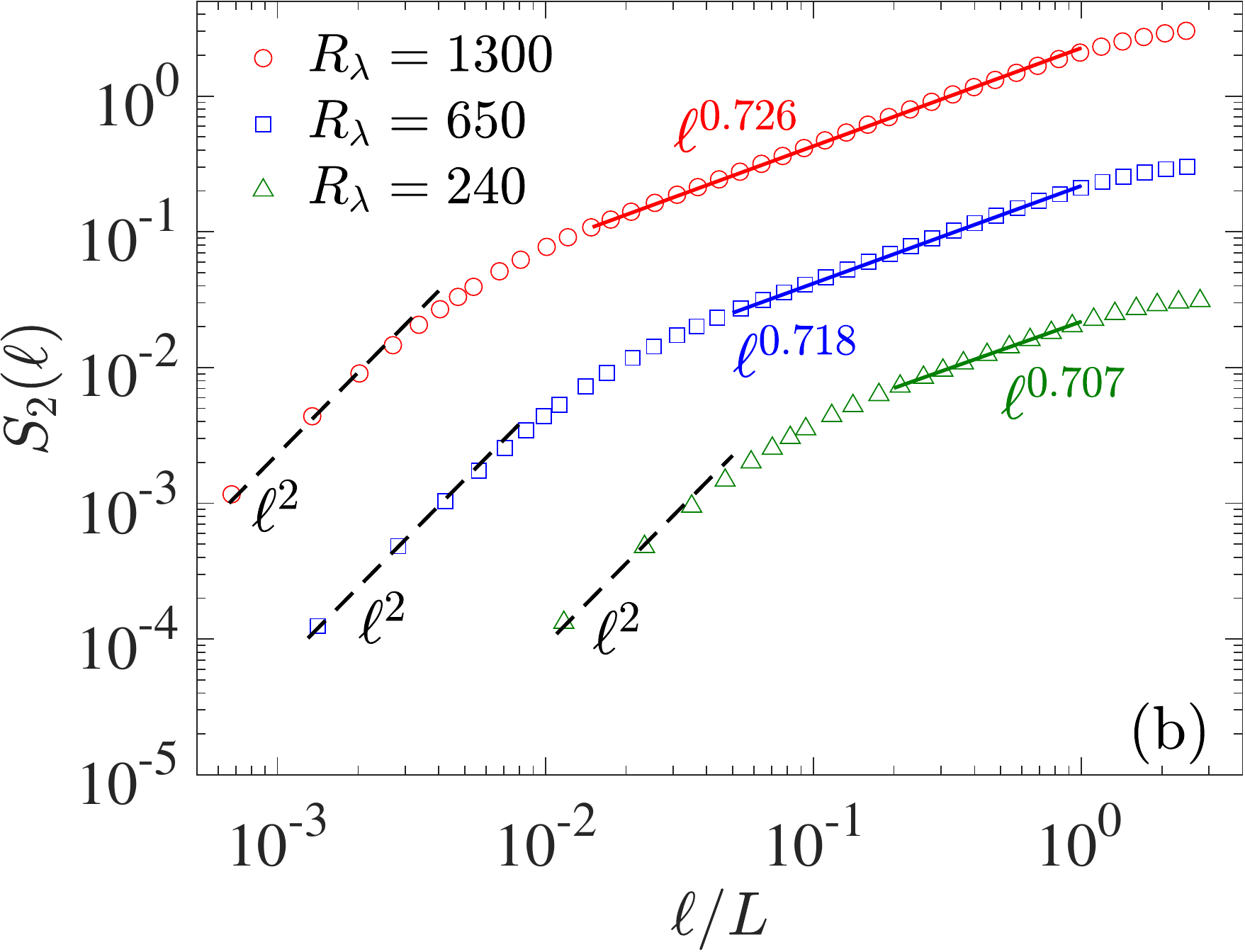}
\caption{Second-order  \emph{longitudinal} (a) and \emph{absolute} (b) structure functions computed from direct numerical simulation of forced homogenous isotropic turbulence with  Taylor scale Reynolds numbers ($R_\lambda:= U \lambda/\nu$ where $U:= \langle |u|^2\rangle^{1/2} $ and $\lambda:= \langle |u|^2\rangle^{1/2}/ \langle |\nabla u|^2\rangle^{1/2}$ ) ranging from $R_\lambda =240 \ (\textcolor{ao(english)}{{\rm green}}),  650 \ (\textcolor{blue}{{\rm blue}}), 1300  \ (\textcolor{red}{{\rm red}})$.  They  exhibit scaling over an inertial range which extends as Reynolds increases.  A best-fit power-law exponent $\zeta_2$ for the power-law $|r|^{\zeta_2}$ in this range is included.  Data from \cite{ISY20},  Fig. 3(a).  }  \label{figure1}
\end{figure}

Onsager took a further step by recognizing that
the behavior \eqref{zerothLaw} requires the fluid to develop singularities as $\nu\to 0$ in a mathematically precise sense.  Specifically, 
for  \eqref{zerothLaw} to occur on sequences of Navier-Stokes solutions, the $p$th order absolute structure functions \emph{cannot} satisfy a bound  of the type 
\be\label{SFbnd}
S_p(\ell) = \int_0^T \int_{\mathbb{T}^d} |u^\nu(x+\ell,t)-u^\nu(x,t)|^p \rmd x \rmd t \leq C |\ell|^{\zeta_p} ,  \qquad \forall |\ell|\leq L,
\ee
for any $\zeta_p>p/3$, $ p\geq 3$ and a constant $C$ independent of viscosity. This assertion, originally stated by Onsager  \cite{O49} about weak solutions of the Euler equation and in the slightly more restrictive setting of H\"{o}lder spaces, has since been rigorously proved \cite{GLE94,CET94,CCFS}. In fact, energy dissipation must vanish as viscosity goes to zero for any family of solutions $\{u^\nu\}_{\nu>0}$ which are uniformly bounded in the Besov space\footnote{A vector  field $v$ belongs to the Besov
space $B_p^{\sigma,\infty}({\mathbb T}^d)$ for $p\geq 1,$ $\sigma\in (0,1)$ at time $t$ if and only if 
\be
\|v(\cdot,t)\|_{L^p}^p<C_0(t), \qquad S_p(\ell,t) \leq C_1(t)\left|\frac{\ell}{L}\right|^{\zeta_p}, \ \forall |\ell|\leq L 
\label{space-struc-fun} \ee 
with $\zeta_p=\sigma p$, $L>0$ and $C_0,C_1\in L^1(0,T)$.  Uniform boundedness of the family $\{u^\nu\}_{\nu>0}$ in $L^{p}(0,T;B_p^{\sigma,\infty}({\mathbb T}^d))$ 
is equivalent to the condition that coefficients $C_0(t),$ $C_1(t)$ independent of $\nu>0$ exist so that the 
bounds (\ref{space-struc-fun}) are satisfied for a.e. $t\in [0,T]$.
 }
 $L^p(0,T;B_p^{1/3+, \infty}(\mathbb{T}^d))$ for $p\geq 3$ \cite{DE19}. 
Thus, Kolmogorov's 1941 theory corresponds to turbulent solutions possessing the maximal degree of smoothness consistent with their ability to anomalously dissipate energy.

It is well known that real fluids do not conform exactly to Kolmogorov's prediction.\footnote{However, weak Euler solutions with less regularity $u\in C^{1/3-}([0,T]\times\mathbb{T}^d)$ and which do not conserve energy have been constructed \cite{I18} after a long series of works \cite{Sch93,Shn97,LS12} and they can be made strictly (globally)  dissipative \cite{BLSV17}. See \cite{BV19} for a very nice recent review of the subject.
 In a sense, these solutions exhibit exact K41-type behavior, although they are not known to arise as physical limits of Leray solutions of Navier-Stokes as required to make contact with real-world high-$Re$ flows.} Intermittency, or  spottiness / non-uniformity of the velocity's roughness and the energy dissipation rate, result in deviations  of the scaling exponents $\zeta_p^\|$ (and $\zeta_p$) from a linear behavior in $p$  \cite{B93,SSJ93,A84,S93,S94,S96}.\footnote{In fact, there is a rigorous connection between these two irregularities: Isett proved \cite{I17} that if $\zeta_p\geq p/3$ for some $p>3$, then the dissipation would have to take place on a full measure set. As experiments indicate that the dissipation takes place on measure zero set of  (spatial) fractal dimension $\approx 2.87$ \cite{MS91}, this is consistent with velocity intermittency $\zeta_p<p/3$ for all $p>3$.  On the other hand, velocity irregularity is not enough to sustain anomalous dissipation: Shvydkoy \cite{S09} proved that ``ordered singularities" with $\zeta_p= 1$ for $p\geq 1$ such as tangential velocity discontinuities across smooth co-dimension one hypersurfaces (regular vortex sheets) conserve energy. }  Experiments do however indicate that for $p$ near three, the formula $\zeta_p\approx p/3$  approximately holds with $\zeta_2\in \frac{2}{3} + [0.03,0.06]$ and $\zeta_3\approx 1$. 
For example, in flow past a sphere $\zeta_2\approx 0.701$ is reported in \cite{B93} and $\zeta_2\approx 0.71$ in \cite{A84} (see Table 2 therein).
 Recent high-resolution numerical simulations report $\zeta_2\approx 0.725$ (see Figure \ref{figure1} and  \ref{figure3}). Although there are slight variations, all these results conform to $\zeta_2 \gtrapprox 2/3$  and $\zeta_3 \lessapprox 1$. These observations motivate:
 \begin{quest}
Why does high-Reynolds number turbulence seems to be as rough as required
to support anomalous dissipation of energy but not much rougher?
 \end{quest}

In this direction, we note that Kolmogovov's prediction \eqref{SFscaling} in the case $p=3$ has a privileged status in that it can be derived (under certain technical assumptions, see Prop. \ref{prop45}) from the equations of motion \eqref{NSE}--\eqref{incom} rather than being merely a consequence of statistical hypotheses.  Specifically, Kolmogorov established the ``4/5--law"  (in dimension three) under only the assumption of anomalous dissipation \eqref{zerothLaw}:
\be\label{45law}
S_3^\|(\ell) :=\langle (\delta_\ell u\cdot \hat{\ell} )^3\rangle \approx -\frac{12}{d(d+2)}\ve \ell,
\ee
which holds in the limit of large Reynolds number $\nu\to 0$ and subsequently small scales $\ell \to 0$.  
 In practice,   \eqref{45law} is observed to hold approximately over the inertial range; see Figure  \ref{figure2} for evidence from \cite{ISY20}. The 4/5--law captures some aspects of the turbulent cascade: energy is transferred through scale by a cubic nonlinear flux term related to $S_3^\|(\ell)$ until it is removed, $\ve>0$, from the system by infinitesimal viscosity.

 \begin{figure} [h!]
 \centering
\includegraphics[width=0.55\linewidth]{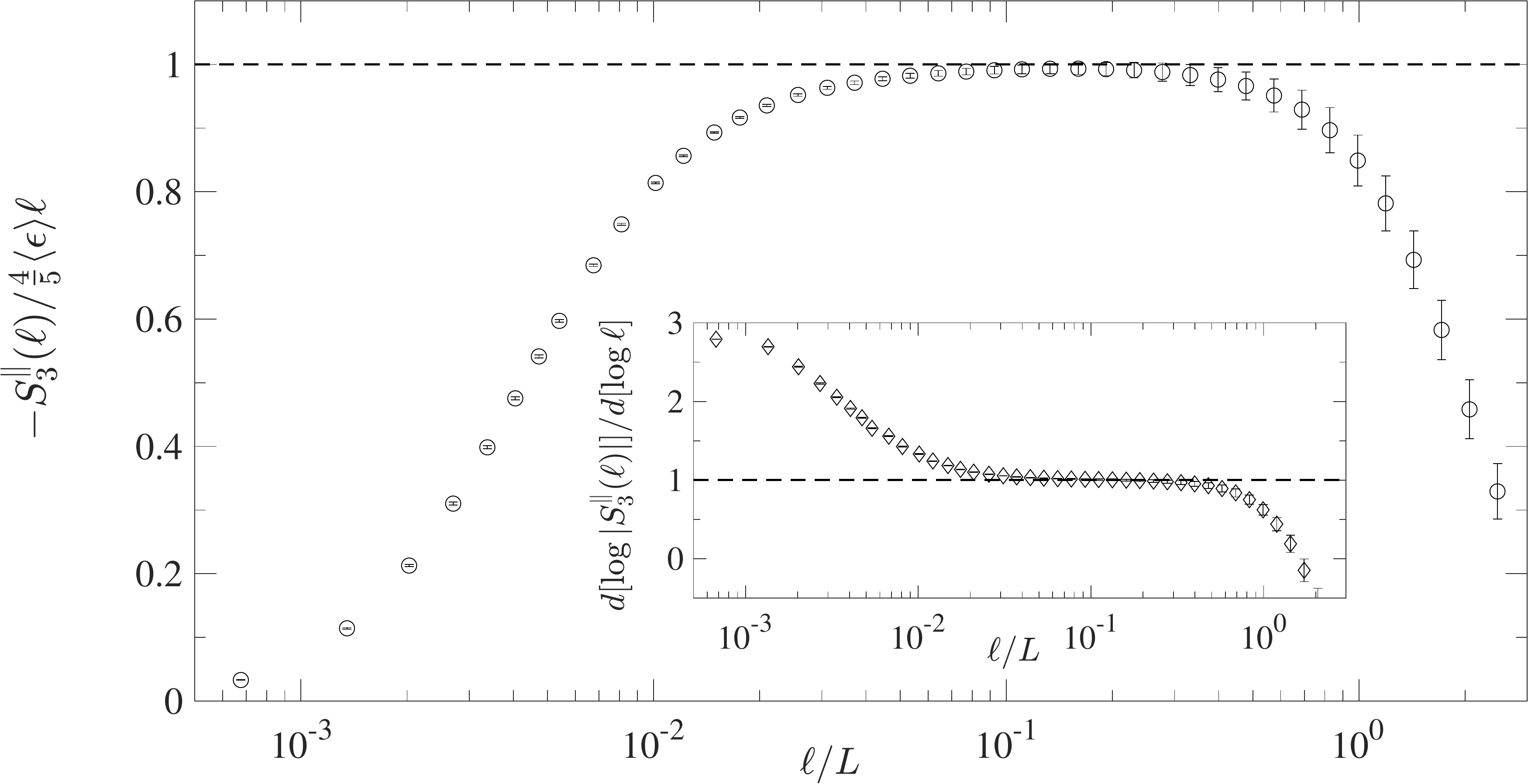} 
\caption{The quantity $-S_3^{\|}(\ell)/(\frac{4}{5}\ell \langle \ve \rangle)$, is plotted for Reynolds numbers $R_\lambda =1300$.  The range of scales over which a value near unity is achieved visibly extends as Reynolds number increases.   Data from \cite{ISY20},  Fig. 1. 
   } \label{figure2}
\end{figure}

In fact, the 4/5--law \eqref{45law} fixes the scaling exponent $p/3$ in \eqref{SFscaling} via the statistical  assumption of monofractal scaling in Kolmogorov's theory.    On its face,  this indicates that the turbulent fluid velocity satisfies $\delta_\ell u \sim \ell^{1/3}$ and so is ``$1/3$ differentiable", at least in some averaged sense.  More precisely, the fact that $\ve$ is apriori bounded by initial data and forcing via equation \eqref{viscousDiss} ensures that  $S_3^\|(\ell) /\ell$ is controlled uniformly for small $\ell$. This suggests that some kind of apriori regularity information --  a `turbulent energy estimate' -- might be extracted from the  4/5--law.
Unfortunately, aside from justifying the assumptions necessary for a rigorous derivation of  \eqref{45law},  there are two obstructions to realizing this hope: (i) the longitudinal structure function does not measure velocity variations in all directions, only those aligned with the separation vector and (ii) the integrand is not sign-definite.  In particular, although a certain skewness is implied by the 4/5--law (positive of $\ve$ means negativity of $\delta_\ell u \cdot \hat{\ell}$ in an averaged sense), it is conceivable that there are large fluctuations which cancel in the integral to yield \eqref{45law} but would disturb this relation if the increments were replaced by their absolute values.  Both of these issues prevent the control on third order longitudinal structure function afforded by   \eqref{45law} from being coercive and so it seems that no direct information about regularity of the velocity can be  immediately  extracted.

Here we  explore  possible  nonlinear mechanisms to extract regularity from 4/5--law.   Our results will be of a conditional nature, involving  hypotheses  which are unproved but which are corroborated by experiment and simulations of  turbulence. Roughly,  our main assumptions  (Hypothesis \ref{hyp45thsLaw} below) are that
 \begin{enumerate}[label=(\alph*)]
\item the Kolmogorov $4/5$--ths law holds
\item the Kolmogorov $4/3$--rds law holds
\end{enumerate}
as well as (Hypothesis \ref{hypothesis} below) 
 \begin{enumerate}[label=(\alph*)]
\item there exists an $\alpha\in [0,1)$ and $C>0$ independent of $\nu$ such that for all scales $\ell>0$
\be\label{bigass*}
- \int_0^T\int_{\mathbb{T}^d}   \left\langle  ( \delta_\ell u^\nu \cdot \hat{\ell})^3\right\rangle_{ang}\rmd x \rmd t \geq C  |\ell|^\alpha \int_0^T\int_{\mathbb{T}^d}   \left\langle  | \delta_\ell u^\nu \cdot \hat{\ell}|^3\right\rangle_{ang}\rmd x \rmd t,  
\ee
\item
there exists an $\beta\in [0,1)$ and $C'>0$  independent of $\nu$ such that for all scales $\ell>0$
\be\label{bigass2*}
- \int_0^T\int_{\mathbb{T}^d}   \left\langle   (\delta_\ell u^\nu \cdot \hat{\ell}) | \delta_\ell u^\nu|^2\right\rangle_{ang} \rmd x \rmd t \geq C'  |\ell|^\beta \int_0^T\int_{\mathbb{T}^d}   \left\langle  | \delta_\ell u^\nu|^3\right\rangle_{ang}\rmd x \rmd t.
\ee
\end{enumerate}
The first hypotheses assert the validity of the Kolmogorov laws for weak solutions.  Although anticipated to be true (the $4/5$--ths law is sometimes referred to as the only ``exact" law in turbulence),  to this day it has not been unconditionally established (see  \cite{E02,B18} for conditional validations and Figure \ref{figure2} for empircal evidence).
The second hypotheses concern some effective alignment properties of the velocity increments with their separation vectors.  A detailed discussion is deferred  to the subsequent section.
As stated above, they are  slightly stronger than Hypothesis \ref{hypothesis} required in the proof, but may be more convenient to verify numerically or in experiment. In fact, there has already been direct evidence of the behavior \eqref{bigass*} with $\alpha\approx 0.03$ from experiment \cite{S93,S94,S96} (see extended discussion in Remark \ref{experiment} below).  In practice, \eqref{bigass*} and \eqref{bigass2*} need only be checked over the finite range of scales in the inertial range
$\ell_\nu \ll \ell \ll L$.
We prove

\begin{thm}\label{theorem}
Let $u$ be a weak solution of the Euler equations of class $L^3(0,T;L^3(\mathbb{T}^d))$. Then
 \begin{enumerate}[label=(\alph*)]
 \item if Hypotheses \ref{hyp45thsLaw}(a) and   \ref{hypothesis}(a) hold, then $u\in L^2(0,T;B_2^{(1-\alpha)/3,\infty}(\mathbb{T}^d))$,
  \item  if Hypotheses \ref{hyp45thsLaw}(b) and   \ref{hypothesis}(b) hold,  then $u\in L^3(0,T;B_3^{(1-\beta)/3,\infty}(\mathbb{T}^d))$. 
\end{enumerate}
\end{thm}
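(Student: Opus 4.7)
The strategy for both parts is identical in outline: the Kolmogorov law gives an upper bound on a signed third-order flux, and the alignment Hypothesis~\ref{hypothesis} converts that signed bound into a bound on the corresponding \emph{absolute} third-order structure function. Part (b) is then essentially immediate; part (a) requires two further maneuvers: a Hölder interpolation and an incompressibility-based comparison between longitudinal and full increments.

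\emph{Part (b).} The 4/3--law (Hypothesis~\ref{hyp45thsLaw}(b)), integrated over space-time, bounds $-\int_0^T\!\!\int \langle(\delta_\ell u\cdot\hat\ell)|\delta_\ell u|^2\rangle_{ang}\,\rmd x\,\rmd t \le C|\ell|$ with a constant proportional to the total dissipation $\varepsilon$ (controlled by the data through \eqref{viscousDiss}). Dividing by $|\ell|^\beta$ and invoking Hypothesis~\ref{hypothesis}(b) yields
\[
\int_0^T\!\!\int_{\mathbb{T}^d}\langle|\delta_\ell u|^3\rangle_{ang}\,\rmd x\,\rmd t \;\le\; C\,|\ell|^{1-\beta},
\]
which is precisely the structure-function criterion with $\zeta_3=3\cdot(1-\beta)/3$, giving $u\in L^3(0,T;B_3^{(1-\beta)/3,\infty}(\mathbb{T}^d))$ when $S_p$ is interpreted as the angular-averaged structure function (as is standard in the turbulence literature).

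\emph{Part (a).} The analogous pairing of the 4/5--law (Hypothesis~\ref{hyp45thsLaw}(a)) with Hypothesis~\ref{hypothesis}(a) gives
\[
\int_0^T\!\!\int_{\mathbb{T}^d}\langle|\delta_\ell u\cdot\hat\ell|^3\rangle_{ang}\,\rmd x\,\rmd t \;\le\; C\,|\ell|^{1-\alpha}.
\]
Jensen's inequality in $\hat\ell$ followed by Hölder in $(x,t)$ (using $u\in L^3_{t,x}$ to bound the measure of the domain) then yields the longitudinal second-order control
\[
\int_0^T\!\!\int_{\mathbb{T}^d}\langle|\delta_\ell u\cdot\hat\ell|^2\rangle_{ang}\,\rmd x\,\rmd t \;\le\; C\,|\ell|^{2(1-\alpha)/3}.
\]
The final step upgrades this longitudinal bound to one on the full increment. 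For this I would pass to Fourier variables: decomposing $\hat\ell$ into components parallel and perpendicular to the wavenumber $k$ and invoking the solenoidal constraint $k\cdot\hat u(k)=0$, both angular-averaged quadratic structure functions reduce to $\sum_k |\hat u(k)|^2\,g_i(|k|r)$ for explicit scalar kernels, with $g_{\text{full}}/g_{\text{long}}\le C(d)$ uniformly on $[0,\infty)$ (the ratio tends to $d+2$ as $|k|r\to 0$ and to a finite positive constant as $|k|r\to\infty$). Consequently $\int\!\!\int\langle|\delta_\ell u|^2\rangle_{ang}\,\rmd x\,\rmd t\le C|\ell|^{2(1-\alpha)/3}$, yielding $u\in L^2(0,T;B_2^{(1-\alpha)/3,\infty})$.

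\emph{Main obstacle.} The delicate step is the Fourier comparison in part (a). The free pointwise inequality $|\delta_\ell u\cdot\hat\ell|^2\le|\delta_\ell u|^2$ goes the wrong way, and its reverse fails pointwise; the reverse comparison holds only after both angular and spatial averaging, and only because of incompressibility. Confirming that the ratio of angular kernels is bounded uniformly both in the short-distance ($|k|r\ll 1$) and long-distance ($|k|r\gg 1$) regimes, with no blow-up in between, is what genuinely uses the divergence-free condition and the structure of the angular integrals; the rest of the argument is bookkeeping with Kolmogorov's law and Hölder.
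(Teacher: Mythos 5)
Your proposal reaches the theorem and part (b) follows essentially the paper's route: pair the $4/3$--law with Hypothesis \ref{hypothesis}(b) to get $\int_0^T\!\int\langle|\delta_\ell u|^3\rangle_{ang}\,\rmd x\,\rmd t\lesssim|\ell|^{1-\beta}$ for small $\ell$ (large $\ell$ being handled by the $L^3$ bound), then read off the Besov regularity. The one point you wave at --- that an angle-averaged bound at radius $|\ell|$ suffices for the $B_3^{s,\infty}$ semi-norm --- is exactly what the paper supplies via the truncated-ball-mean characterization of Besov spaces (Triebel), so make sure you invoke that rather than the pointwise-in-$\ell$ difference definition. Part (a) agrees with the paper up through the Jensen step yielding $\int_0^T\langle S_2^\|(\ell)\rangle_{ang}\,\rmd t\lesssim|\ell|^{2(1-\alpha)/3}$, but your upgrade from longitudinal to full increments is genuinely different. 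The paper proves this as Lemma \ref{lemma1}: starting from the weak form of the Euler equations it derives the K\'arm\'an--Howarth-type identity $\fint_{B_L(0)}S_2(\ell)\,\rmd\ell=\langle S_2^\|(L\hat\ell)\rangle_{ang}$ (the dynamical terms drop after spatial integration, so only incompressibility and periodicity survive), and then upgrades the ball-average control to H\"older continuity of $\ell\mapsto\|\delta_\ell u\|_{L^2}$ via an elementary averaging lemma. You instead compare the two angular kernels in Fourier space using $k\cdot\hat u(k)=0$; your asymptotics are right (the ratio tends to $d+2$ at small $|k||\ell|$ and to $d$ at large $|k||\ell|$, and is continuous and positive in between, hence uniformly bounded), so the argument closes, and it buys a slightly cleaner finish since the dyadic Besov blocks are read off directly from $\sum_k|\hat u(k)|^2\min(|k|^2|\ell|^2,1)\lesssim|\ell|^{2s}$ without any H\"older-continuity lemma. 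What it loses is that it is tied to $p=2$ and to Plancherel, whereas the paper's physical-space identity is exact, extends to a tensorial version, and makes transparent that the comparison is a sharp identity between a ball average of $S_2$ and a sphere average of $S_2^\|$ rather than a two-sided kernel inequality. Both arguments ultimately use only incompressibility for this step, so there is no gap --- just be explicit that the kernel ratio is bounded on all of $(0,\infty)$, not merely at the two endpoints.
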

Theorem \ref{theorem}(b) is not very surprising since Hypothesis \ref{hypothesis}(b) (nearly) assumes control on the absolute structure function by the longitudinal.   We include it because there is experimental evidence of such control, because $L^3$ seems to be the natural scale for regularization by the energy cascade. It also applies unconditionally to the entropic solutions of the Burgers equation (see Remark \ref{Burgers}). 
On the other hand, Theorem \ref{theorem}(a) produces regularity in $L^2$ \emph{without} assuming that the full velocity increment can be controlled by the longitudinal component. This is due to the following Lemma which shows that the dynamical law (Euler or Navier-Stokes equations) of the fluid can be used to  deduce information on the full velocity increment from partial information on the behavior of the component in the direction of its separation vector.

\begin{lemma}\label{lemma1}
A weak solution of the incompressible Euler equations is of class  $ L^2(0,T;B_2^{\zeta_2^\|/2,\infty} (\mathbb{T}^d))$ with $\zeta_2^\|\in (0,2]$ if and only if the longitudinal structure function defined by \eqref{structurefunctions} satisfies 
\be\label{s2bound}
\int_0^T \left\langle  S_2^\|(\ell)\right\rangle_{ang} \rmd t \lesssim  |\ell|^{\zeta_2^\|}, \qquad \forall\  |\ell|>0.
\ee
\end{lemma}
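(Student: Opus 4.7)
The "only if" direction is immediate from $|\delta_\ell u \cdot \hat{\ell}|\leq |\delta_\ell u|$, which gives $S_2^\|(\ell)\leq S_2(\ell)$ pointwise and hence \eqref{s2bound} after time integration and angular averaging. For the converse, the plan is to pass to Fourier series. Writing $u(x,t)=\sum_{k\in\mathbb{Z}^d}\hat{u}_k(t)e^{ik\cdot x}$ with $k\cdot\hat{u}_k(t)=0$ by incompressibility, Parseval's identity yields, for every $\omega\in S^{d-1}$ and $r>0$,
\begin{align*}
\int_{\mathbb{T}^d}|\delta_{r\omega}u|^2\,\rmd x &= 4\sum_k|\hat{u}_k|^2\sin^2(rk\cdot\omega/2),\\
\int_{\mathbb{T}^d}(\delta_{r\omega}u\cdot\omega)^2\,\rmd x &= 4\sum_k|\omega\cdot\hat{u}_k|^2\sin^2(rk\cdot\omega/2).
\end{align*}

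Averaging both identities over $\omega\in S^{d-1}$ and invoking incompressibility is the next step. For fixed $k$, I would rotate coordinates so that $k=|k|e_1$; then $\hat{u}_k$ lies in the hyperplane perpendicular to $e_1$, and the $SO(d-1)$-symmetry of the spherical measure fixing $e_1$ reduces $\langle|\omega\cdot\hat{u}_k|^2\sin^2(r|k|\omega_1/2)\rangle_{ang}$ to $|\hat{u}_k|^2\Psi_d(r|k|)$, where $\Psi_d(s):=\langle\omega_2^2\sin^2(s\omega_1/2)\rangle_{ang}$. Setting likewise $\Phi_d(s):=\langle\sin^2(s\omega_1/2)\rangle_{ang}$, I arrive at the parallel representations
\begin{equation*}
\langle S_2(r\omega)\rangle_{ang}=4\sum_k|\hat{u}_k|^2\Phi_d(r|k|),\qquad \langle S_2^\|(r\omega)\rangle_{ang}=4\sum_k|\hat{u}_k|^2\Psi_d(r|k|).
\end{equation*}
The lemma thus reduces to the dimensional equivalence $\Phi_d(s)\asymp\Psi_d(s)$ on $(0,\infty)$.

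To establish this equivalence, the spherical identity $\langle\omega_j^2 f(\omega_1)\rangle_{ang}=\frac{1}{d-1}\langle(1-\omega_1^2)f(\omega_1)\rangle_{ang}$ for $j\geq 2$ gives $\Phi_d(s)=\langle\omega_1^2\sin^2(s\omega_1/2)\rangle_{ang}+(d-1)\Psi_d(s)$, so it suffices to bound $R(s):=\langle\omega_1^2\sin^2(s\omega_1/2)\rangle_{ang}/\Psi_d(s)$ uniformly in $s>0$. This ratio is continuous and strictly positive on $(0,\infty)$, with $R(s)\to \langle\omega_1^4\rangle/\langle\omega_1^2\omega_2^2\rangle=3$ as $s\to 0^+$ by the standard isotropic fourth-moment formula on $S^{d-1}$, and $R(s)\to 1$ as $s\to\infty$ by the Riemann--Lebesgue lemma applied to the smooth marginal density of $\omega_1$ on $[-1,1]$. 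Extracting these limits carefully and thereby a uniform upper bound is the main technical obstacle. Once $\Phi_d\asymp\Psi_d$ is in hand, the elementary pointwise estimate $\sin^2(rk\cdot\omega/2)\leq \min((r|k|/2)^2,1)\lesssim \Phi_d(r|k|)$, which holds because $\Phi_d(s)\gtrsim \min(s^2,1)$ and does not use incompressibility, upgrades the angular-average bound to the pointwise-in-$\ell$ estimate $S_2(\ell)\lesssim \langle S_2(|\ell|\omega)\rangle_{ang}$, so that $\int_0^T S_2(\ell,t)\,\rmd t\lesssim |\ell|^{\zeta_2^\|}$, which is the desired Besov regularity.
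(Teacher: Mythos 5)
Your proof is correct, and it takes a genuinely different route from the paper's. The paper handles the nontrivial direction in physical space: it derives the increment equation \eqref{incrementweak} from the weak form of the Euler equations, contracts with $\ell$ and integrates over the torus, whereupon every dynamical term (time derivative, advection, pressure, forcing) drops out by periodicity and incompressibility, leaving the exact identity \eqref{S2ident} equating the ball average of $S_2$ over $B_L(0)$ with the angle average of $S_2^\|$ at radius $L$; the pointwise-in-$\ell$ control required for the Besov seminorm is then recovered from this ball-average bound via the H\"older-continuity Lemma \ref{lemHolder} applied to $f(\ell)=\|\delta_\ell u\|_{L^2}$. You instead work on the Fourier side, where the only input is $k\cdot\hat u_k=0$, replacing the exact identity by the two-sided kernel comparison $\Phi_d\asymp\Psi_d$ and substituting for Lemma \ref{lemHolder} the pointwise domination $\sin^2(k\cdot\ell/2)\lesssim\Phi_d(|k|\,|\ell|)$. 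Both arguments are ultimately kinematic---only incompressibility and periodicity enter, the Euler dynamics cancelling identically in the paper's derivation---but your version makes this transparent, gives explicit dimension-dependent constants, and yields a direct scale-by-scale inequality; the paper's version gives the cleaner exact (indeed tensorial) identity \eqref{S2ident}, which is of independent interest and transfers verbatim to the viscous setting. The step you flag as the main obstacle is in fact routine: $R$ is continuous and strictly positive on $(0,\infty)$ because $\Psi_d(s)>0$ for every $s>0$ (the zero set of $\omega\mapsto\omega_2^2\sin^2(s\omega_1/2)$ is $\omega$-null), and since $R$ has the finite limits you compute at $0$ and $\infty$ it is bounded; the same continuity-plus-limits argument gives $\Phi_d(s)\gtrsim\min(s^2,1)$, so nothing further is needed to close the proof.
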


\begin{rem}[Weak solutions as zero-viscosity limits]
Lemma \ref{lemma1} has some implications for the weak inviscid limit.  In particular, uniform boundedness of the family $\{u^\nu\}_{\nu>0}$ in $L^2(0,T;B_2^{\zeta_2^\|/2,\infty} (\mathbb{T}^d))$ is equivalent to a bound of the form \eqref{s2bound} independent of viscosity.  In fact, as in Lemma 1 of \cite{DN18},  for Leray solutions $u^\nu \in L^\infty(0,T;L^2(\mathbb{T}^d))\cap L^2(0,T;H^1(\mathbb{T}^d))$ the condition \eqref{s2bound} is equivalent to
\be\label{s2boundfin}
\int_0^T \left\langle  S_2^\|(\ell;\nu)\right\rangle_{ang} \rmd t \lesssim |\ell|^{\zeta_2^\|}, \qquad   \eta(\nu)\leq |\ell|\leq L,
\ee
where $\eta(\nu)= \nu^{1/2(1-s)}$.  Thus, a uniform scaling with any positive exponent of the longitudinal structure function in the ``inertial range" suffices to obtain weak Euler solutions in the inviscid limit (see Thm 1 of \cite{DN18}).   We emphasize that the bound \eqref{s2boundfin} is not naively a compactness statement, although for equations structurally similar to Navier-Stokes, Lemma \ref{s2bound} transforms it into one.
See Figure \ref{figure3} for empirical verification of  Lemma \ref{lemma1} as it applies to inviscid limits of Navier-Stokes turbulence,  relating the bounds (scalings) of the absolute and longitudinal structure functions.
\end{rem}

\begin{figure} [h!] 
\centering
\includegraphics[width=0.50\linewidth]{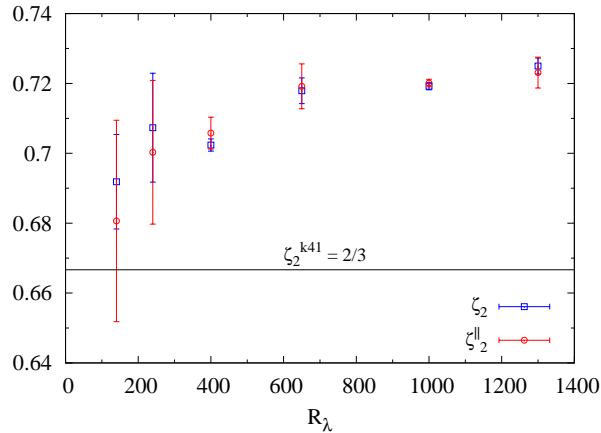} 
\caption{
Best-fit exponents within the inertial range are plotted for absolute $\zeta_2$ and longitudinal $\zeta_2^{\|}$ structure functions. The K41 value of $\zeta_{2}^{k41}:=2/3$ is given for reference. The exponents $\zeta_2, \zeta_2^{\|}$ appear to saturate at $0.725$,  which serves as the exponent which provides the uniform (in $Re$) bounds  \eqref{SFbnd} and \eqref{s2bound}.    Data from \cite{ISY20},  Fig. 3(b).  } \label{figure3}
\end{figure}

 \section{Kolmogorov 4/5--law and Alignment Hypotheses}
 
We here recall a rigorous formulation of the Kolmogorov 4/5--law  for weak solutions of the Euler equations arising as zero viscosity limits and introduce the precise Hypotheses under which our Theorem is established.
 As discussed above, Onsager conjectured \cite{O49} that sufficiently rough, dissipative weak solutions of the Euler equations are candidate descriptions of  high-Reynolds number flows exhibiting the behavior \eqref{zerothLaw}. Onsager's vision of weak Euler solutions as a framework to study zero-viscosity limits follows from sufficient compactness. Indeed, if a family of Leray solutions $\{u^\nu\}_{\nu>0}$ is precompact in $L^3(0,T;L^3(\mathbb{T}^d))$, then strong space-time $L^3$--limits exist $u^\nu\to u\in L^3(0,T;L^3(\mathbb{T}^d))$ and are weak solutions of the Euler equations.\footnote{Such compactness is implied if the family of Leray solutions $\{u^\nu\}_{\nu>0}$ is uniformly--in--$\nu$ bounded in  $L^3(0,T;B_3^{s,\infty}(\mathbb{T}^d))$ \emph{for any} $s>0$ \cite{DE19,DN18}. As discussed above, this is robustly observed in experiments and simulations \cite{B93,SSJ93,S93,S94,S96}.} In fact, this assumption also guarantees the existence of a limiting dissipation measure $\varepsilon[\bu]$,
 \be
\Dlim_{\nu\to 0} \varepsilon^\nu[\bu^\nu] = \varepsilon[\bu] \geq 0,
 \ee
 where the limit is understood in the sense of distributions (it holds upon pairing with any smooth test function and is denoted by $\Dlim$).
 Furthermore, Duchon and Robert \cite{DR00} showed that any weak solution of the Euler equations $u$ of class $L^3(0,T;L^3(\mathbb{T}^d))$ satisfies a (weak) energy balance
\be \partial_t\left(\frac{1}{2}|u|^2\right)+\nabla\cdot\left[\left(\frac{1}{2}|u|^2+p\right)u\right] = -D[u] \label{Ebal} \ee 
where the `inertial dissipation' $D[u]$ is defined by the distributional limit
 \begin{equation}\label{DuchonRobertAnom}
 D[u] : = \Dlim_{\ell \to 0} \frac{1}{4\ell} \int_{\mathbb{T}^d}\! (\nabla \varphi)_\ell(r) \cdot \delta_r\bu(x,t) |\delta_r\bu(x,t)|^2 \rmd r
 \end{equation}
 with $\varphi$ an arbitrary standard mollifier, $(\nabla \varphi)_\ell(r)= \ell^{-d} \nabla \varphi(r/\ell)$ and  $\delta_r\bu(x,t)=u(x+r,t)- u(x,t)$. The distribution defined by \eqref{DuchonRobertAnom} represents the flux of energy into or out of the fluid due to a nonlinear inertial cascade to zero length-scale facilitated, as Onsager envisioned, by sufficiently irregular velocity fields.
 As a consequence of \eqref{Ebal},
  the inertial dissipation matches onto the viscous dissipation anomaly
\be 
D[u] =\varepsilon[u],  \label{flux-anom2}
 \ee
 and the distribution $ D[u]$ must be non-negative and independent of the mollifier $\varphi$.  This independence can be seen directly provided that $u$ has some additional spatial continuity which is made precise by the following

\begin{hyp}\label{hyp45thsLaw}
Let $u$ be any weak solution of the Euler equation of class $L^3(0,T;L^3(\mathbb{T}^d))$.  Suppose 
 \begin{enumerate}[label=(\alph*)]
\item the following version of the Kolmogorov $4/5$--law holds
\begin{align}\label{lim45ths}
\Dlim_{|\ell|\to 0} \frac{1}{|\ell|} \left\langle  ( \delta_\ell u \cdot \hat{\ell})^3\right\rangle_{ang}  &=D_{4/5}^*[u],
\end{align}
\item the following version of the Kolmogorov $4/3$--law holds
\begin{align}\label{lim43rds}
\Dlim_{|\ell|\to 0} \frac{1}{|\ell|}\left\langle   (\delta_\ell u \cdot \hat{\ell}) | \delta_\ell u|^2\right\rangle_{ang} &=D_{4/3}^*[u],
\end{align}
\end{enumerate}
where the angle average denotes $\left\langle f(\ell) \right\rangle_{ang} := \fint_{S^{d-1}} f(\ell) \ \rmd \omega(\hat{\ell})$ and $ \rmd \omega$  is the measure on solid angles.
\end{hyp}

With Hypothesis \ref{hyp45thsLaw} in hand, we see explicitly that $D[u]$ does not depend on the arbitrary mollifier and the standard versions of the Kolmogorov laws hold:

\begin{prop}[\cite{DR00} \& \cite{E02}]\label{prop1}
Under Hypothesis \ref{hyp45thsLaw}, the following distributional equalities hold
 \be\label{45thand43rd}
D_{4/5}^*[u]= -\frac{12}{d(d+2)} D[u], \qquad D_{4/3}^*[u]= -\frac{4}{d} D[u].
\ee
 \end{prop}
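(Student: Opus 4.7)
My plan follows the scheme of Duchon--Robert \cite{DR00} and Eyink \cite{E02}: specialize the definition \eqref{DuchonRobertAnom} to a \emph{radial} mollifier $\varphi(r)=\psi(|r|)$, so that $\nabla\varphi(r)=\psi'(|r|)\hat{r}$, and decouple radial from angular integrals. After the change of variables $r=\ell s$ and passage to spherical coordinates $s=\rho\hat{s}$, the Duchon--Robert integrand factors into the product of $\psi'(\rho)\rho^{d-1}$ with the angular integral $|S^{d-1}|\,\langle(\delta_{\ell\rho} u\cdot\hat{\ell})|\delta_{\ell\rho} u|^2\rangle_{ang}$. Extracting one more factor of $\rho$ yields
$$\frac{1}{4\ell}\int(\nabla\varphi)_\ell(r)\cdot\delta_r u\,|\delta_r u|^2\,\rmd r \;=\; \frac{|S^{d-1}|}{4}\int_0^\infty \psi'(\rho)\,\rho^d\,\frac{\langle(\delta u\cdot\hat{\ell})|\delta u|^2\rangle_{ang}(\ell\rho)}{\ell\rho}\,\rmd \rho.$$
By Hypothesis~\ref{hyp45thsLaw}(b) the inner ratio converges distributionally to $D^*_{4/3}[u]$ as $\ell\to 0$. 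Integration by parts in $\rho$, combined with the normalization $|S^{d-1}|\int_0^\infty\psi(\rho)\rho^{d-1}\rmd\rho=\int\varphi=1$, gives $\int_0^\infty\psi'(\rho)\rho^d\,\rmd\rho=-d/|S^{d-1}|$, so $D[u]=-\tfrac{d}{4}D^*_{4/3}[u]$. This is the second identity of \eqref{45thand43rd}.

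The first identity cannot be obtained in the same way, because the Duchon--Robert integrand inherently carries the full $|\delta u|^2$ and not just the longitudinal square $(\delta u\cdot\hat{\ell})^2$, so Hypothesis~\ref{hyp45thsLaw}(a) does not substitute directly. My plan is instead to derive the algebraic relation $D^*_{4/3}[u]=\tfrac{d+2}{3}D^*_{4/5}[u]$ and combine it with the step above. Decomposing $|\delta u|^2=(\delta u\cdot\hat{\ell})^2+|\delta u_\perp|^2$ reduces this to the Karman--Howarth--Monin--type identity
$$\Dlim_{|\ell|\to 0}\frac{1}{|\ell|}\langle(\delta u\cdot\hat{\ell})\,|\delta u_\perp|^2\rangle_{ang}\;=\;\tfrac{d-1}{3}\,D^*_{4/5}[u],$$
which asserts that, after angle averaging, the transverse and longitudinal components of the third-order increment tensor $\langle\delta u_i\delta u_j\delta u_k\rangle$ are algebraically linked. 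For a smooth incompressible field this follows from $\nabla_r\cdot\delta_r u(x)=(\nabla\cdot u)(x+r)=0$ via the standard derivation of the Karman--Howarth relation.

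The hard part will be making this last identity rigorous in the distributional setting of an $L^3$ weak solution, where direct integration by parts in $r$ on $u$ itself is not available. I would proceed by testing the relation $\nabla_r\cdot\delta_r u=0$ against suitably chosen radial tensorial weights contracted with $\delta u\otimes\delta u$, so that all differentiation is transferred onto smooth objects, and then invoke the existence of the distributional limits postulated by Hypothesis~\ref{hyp45thsLaw} to upgrade the resulting asymptotic Karman--Howarth relation to an exact identity in $\mathcal{D}'$ as $\ell\to 0$. With that in hand, the spherical-coordinate bookkeeping of the first step delivers both equalities of \eqref{45thand43rd} simultaneously, essentially as in \cite{E02}.
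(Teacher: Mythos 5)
The paper does not actually prove Proposition \ref{prop1}; it quotes it from \cite{DR00} and \cite{E02}, and your plan correctly reconstructs the route those references take. Your derivation of the second identity is complete and correct: choosing a radial mollifier in \eqref{DuchonRobertAnom}, rescaling $r=\ell s$, and integrating by parts in $\rho$ is exactly the Duchon--Robert computation, and the constants check out ($\int_0^\infty\psi'(\rho)\rho^d\,\rmd\rho=-d/|S^{d-1}|$ together with the prefactor $|S^{d-1}|/4$ gives $D^*_{4/3}[u]=-\tfrac{4}{d}D[u]$). The one step you elide is the interchange of the $\rho$-integral with the distributional limit $\ell\to0$; this is standard (pair with a test function in $(x,t)$, observe that the resulting scalar family is bounded for small separations because its limit exists, and apply dominated convergence on the compact support of $\psi'$), but it should be recorded.

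For the first identity, your reduction to the angle-averaged Karman--Howarth--Monin relation $\Dlim_{|\ell|\to0}|\ell|^{-1}\langle(\delta_\ell u\cdot\hat\ell)|\delta_\ell u_\perp|^2\rangle_{ang}=\tfrac{d-1}{3}D^*_{4/5}[u]$ is the right target and the numerology is consistent with \eqref{45thand43rd}. Be careful, however, with the assertion that this ``follows from $\nabla_r\cdot\delta_ru=0$'': even for smooth fields, $\partial_{r_k}\big(\delta u_i\,\delta u_j\,\delta u_k\big)$ contains the non-vanishing terms $(\partial_ku_i)(x+r)\,\delta u_j\,\delta u_k+\dots$, which are only tamed after integrating over $x\in\mathbb{T}^d$ (homogeneity) and working with the correlators of $u$ at two points rather than with the increments directly; and absent isotropy, the link between the two angle averages is extracted by pairing with suitable radial weights and solving the resulting ODE in $|\ell|$, which is precisely the technical content of \cite{E02}. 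So the harder half of your argument is a correct plan rather than a proof, as you acknowledge. One structural remark: routing the $4/5$--law through the $4/3$--law means your proof of the first identity consumes Hypothesis \ref{hyp45thsLaw}(b) as well as (a); this is permitted by the statement of the Proposition, but differs slightly from \cite{E02}, which handles the longitudinal limit directly.
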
\label{prop45}
See also discussion in \cite{D19}, which gives a Lagrangian interpretation of these distributions.  In combination with \eqref{flux-anom2} which holds for strong vanishing viscosity limits, Proposition \ref{prop1} constitutes precise versions of the celebrated Kolmogorov 4/5 and 4/3--laws (upon setting $d=3$ in \eqref{45thand43rd}):
  \be\label{viscouslimlaws}
D_{4/5}^*[u]= -\frac{12}{d(d+2)} \varepsilon[u], \qquad D_{4/3}^*[u]= -\frac{4}{d} \varepsilon[u].
  \ee
  Equation \eqref{viscouslimlaws} is a rigorous version of \eqref{45law}.
    It should be noted that the above relationships are \emph{local} in that they hold in the sense of space-time distributions. Moreover, they show that the fluxes $D_{4/5}^*[u]$ and $D_{4/3}^*[u]$ are, in fact, non-positive as distributions, implying a form of skewness of the velocity field.

Now, for any weak Euler solution of class $L^3(0,T;L^3(\mathbb{T}^d))$, the inertial dissipation $D[u]$ must be finite upon averaging in space time.
However, it need not be signed.  One consequence of \eqref{flux-anom2} is that the inertial dissipation inherits an a-priori bound in terms of initial data and forcing, and is non-negative for high-Reynolds number flows exhibiting anomalous dissipation \eqref{zerothLaw}:
 \be\label{fluxbnd}
0 < \int_0^T \int_{\mathbb{T}^d} D[u]\ \rmd x \rmd x <\infty.
 \ee
  The balance \eqref{flux-anom2}, which leads to \eqref{fluxbnd}, is related to the direct energy cascade and can be interpreted as the statement that a nonlinear transfer of energy can be sustained even to infinitesimally small scales, where an infinitesimal viscosity can efficiently remove energy from the system.
Our main thesis is that \eqref{fluxbnd} can provide a partial explanation for the discussed smoothness of the weak Euler solutions, provided that the solutions additionally possess certain structural properties.  
More precisely,  we adopt the assumption that

 \begin{hyp}\label{hypothesis} 
 Let $u$ be a weak solution of Euler satisfying Hypothesis \ref{hyp45thsLaw}. Suppose in addition that
 \begin{enumerate}[label=(\alph*)]
\item there exists an $\alpha\in [0,1)$ and $C:=C(d,T,u_0,f)>0$ such that
\be\label{bigass}
-\int_0^T \int_{\mathbb{T}^d} D_{4/5}^*[u] \rmd x \rmd t\geq  \limsup_{|\ell|\to 0} \frac{C}{ |\ell|^{1-\alpha}} \int_0^T\int_{\mathbb{T}^d}   \left\langle  | \delta_\ell u \cdot \hat{\ell}|^3\right\rangle_{ang}\rmd x \rmd t.
\ee
\item
 there exists an  $\beta\in [0,1)$ and $C':=C'(d,T,u_0,f)>0$  such that
 \be\label{bigass2}
-\int_0^T \int_{\mathbb{T}^d} D_{4/3}^*[u] \rmd x \rmd t \geq  \limsup_{|\ell|\to 0} \frac{C'}{ |\ell|^{1-\beta}} \int_0^T \int_{\mathbb{T}^d}  \left\langle  | \delta_\ell u |^3 \right\rangle_{ang} \rmd x\rmd t.
\ee
\end{enumerate}
\end{hyp}

\begin{rem}
Clearly Hypothesis \ref{hypothesis}(b) is the stronger assumption since $   | \delta_\ell u \cdot \hat{\ell} |^3  \leq   | \delta_\ell u |^3 $.  In turn, we will show that it leads to a stronger form of regularization on the limit Euler solution.  It should be noted that Hypothesis \ref{hypothesis}(a)  is an assumption on the possible cancellations of the angle average rather than a brute force control of a piece of the velocity increment by the full increment as in Hypothesis \ref{hypothesis}(b).  Such statements are assumptions on the average \emph{anti-alignment} of velocity increments with their separation vectors.
Additionally we remark that the choice of $D_{4/3}^*[u]$ in Hypothesis \ref{hypothesis}(b) was not important.  For our purpose (Theorem \ref{theorem}(b)), it is sufficient that it hold for either distribution $D_{4/3}^*[u]$ or $D_{4/5}^*[u]$ -- the important properties are the finiteness and positivity upon space-time averaging.  In any case, if Hypothesis \ref{hyp45thsLaw} holds, with both limits \eqref{lim45ths} and \eqref{lim43rds} existing, then the distributions are interchangeable in the statement of the \eqref{bigass} and \eqref{bigass2}.
 \end{rem}

\begin{rem}[Evidence of Anti--Alignment]\label{experiment}
There is some experimental evidence \cite{S93,S94,S96} for the type of alignment assumed in Hypothesis \ref{hypothesis}(a), in particular that absolute differences do differ in scaling as in \eqref{bigass}, but very slightly.  Specifically,
for absolutely third-order longitudinal structure functions, experiments find \cite{S94} that (a slightly strengthened version of) Hypothesis \ref{hypothesis}(a) holds with $\alpha \approx 0.03$, i.e. $-\langle (\delta_\ell u)^3\rangle \sim \ell^\alpha \langle |\delta_\ell u|^3\rangle$ in the inertial range; see Table 1 therein.  This behavior of $\alpha \ll 1$ has also been observed in a number other experiments
 \cite{B93,SSJ93}. It should be noted that the experimental measurements are inferred from data of the velocity field along a one-dimensional cut and computed the longitudinal structure by appealing to Taylor's hypothesis, ergodicity and statistical isotropy and homogeneity.  
 \end{rem}

 \begin{rem}[Stochastic Setting]
Rigorously establishing alignment properties such as those appearing in Hypothesis \ref{hypothesis} seems to be a very difficult task.  Moreover, even if true generically, it is quite conceivable that it is false pathwise the setting of the deterministic Navier-Stokes solutions due to non-generic events.  Thus, such properties might be easier established in the stochastically forced or random data setting.  For instance, one might be able to prove the existence of statistically stationary, homogenous isotropic martingale solutions.  It is then plausible that  \eqref{bigass} and \eqref{bigass2} hold for such solutions upon ensemble averaging.  See \cite{B18} for some interesting developments concerning the validity of Hypothesis \ref{hyp45thsLaw}.
 \end{rem}

The fact that inertial dissipation and the direct energy cascade can provide a regularization mechanism the weak solutions is well understood in some model problems, such as 1-dimensional conservation laws \cite{GP11,J09,TT07} as well the dyadic (Desnyansky--Novikov) shell model of turbulence  \cite{CZ16}.  For three-dimensional Navier-Stokes, no form of uniform fractional regularity or self-regularization has ever been rigorously established from first principles, however experiments and simulations ubiquitously indicate that solutions do possess some form of these phenomena \cite{B93,SSJ93,S93,S94,S96}. In particular, as discussed above, measurements of multifractal structure function scaling exponents from over the last 60 years indicate that some turbulent solutions of Navier-Stokes enjoy some limited uniform fractional regularity in $L^p$ spaces.
Under the  Hypothesis \ref{hyp45thsLaw} and \ref{hypothesis}, the latter being of a quantitative nature, we capture some of the smoothing effect of the nonlinearity and obtain a self-regularizing property of dissipative weak Euler solutions by Theorem \ref{theorem}. Thus, the 4/5--law together some alignment properties implies regularization for any such weak Euler solution with a finite positive inertial dissipation (in particular, vanishing viscosity limits).
Of course, our Theorem \ref{theorem} is of a conditional nature in that it relies on two major Hypotheses \ref{hyp45thsLaw} and  \ref{hypothesis}, both of which seem very difficult to prove a-priori. However, the validity of these Hypotheses can be checked in Nature through controlled experiment and in direct numerical simulation (DNS) of the Navier-Stokes equations at high Reynolds number.    In Remark \ref{experiment}, we recalled some existing experimental results concerning the validity Hypothesis \ref{hypothesis}(a). 
 We hope that our Hypotheses  \ref{hyp45thsLaw} and  \ref{hypothesis} will be subject to much further testing and scrutiny.


\begin{rem}[Burgers Equation]\label{Burgers}

The two Hypotheses \ref{hyp45thsLaw} and  \ref{hypothesis}  are true for entropy solutions of the 1-dimensional Burgers equation. In particular, the so-called 1/12th law states
\be\label{1/12thlaw}
\lim_{|\ell|\to 0} \frac{1}{12}\int_0^T\int_{\mathbb{T}^d}\frac{1}{|\ell|} \left\langle (\delta_\ell u)^3\right\rangle_{ang} \rmd x \rmd t = -\int_0^T\int_{\mathbb{T}^d}  \varepsilon(x,t) \rmd x\rmd t,
\ee
where the one-dimensional angle average is defined by $
\left\langle (\delta_\ell u)^3\right\rangle_{ang} = \frac{1}{2}\left[ (\delta_\ell u)^3(|\ell|) +  (\delta_\ell u)^3(-|\ell|)\right].$
Equation \ref{1/12thlaw} is the analogue of the 4/5--law in the setting of Burgers and is rigorously established for vanishing viscosity limits.
If there are countably many shocks, the following can be explicitly computed
\begin{align*}
\lim_{|\ell|\to 0} \int_0^T\int_{\mathbb{T}^d}\frac{1}{|\ell|} \left\langle (\delta_\ell u)^3\right\rangle_{ang} \rmd x \rmd t &= \sum_i  (\Delta u_i(t))^3,
\end{align*}
where $\Delta u_i(t)$ is the jump at the $i$th shock. The Lax entropy condition is that $u^->u^+$ at shocks, or $\Delta u_i(t)<0$.  This means that $\Delta u_i(t)= -|\Delta u_i(t)|$ and our Hypothesis \ref{hypothesis} holds with $\alpha=\beta = 0$. 
This is an example of perfect ``anti--alignment".
In accord with Theorem \ref{theorem}(b), we obtain $u\in L^3(0,T;B^{1/3,\infty}_3(\mathbb{T}))$.  In light of the inclusion $(L^\infty \cap BV)(\mathbb{T}^d)\subset B_p^{1/p,\infty}(\mathbb{T}^d)$, this is consistent with the well known BV regularity of entropy solutions of 1D hyperbolic conservation laws \cite{GP11,J09,TT07}. 
\end{rem}

\section{Proofs}

\begin{proof}[Proof of Lemma \ref{lemma1}] Let $s=\zeta_2^\| /2$.
\textbf{(1) $\implies$ (2)}.  This direction is trivial since
\be
\left\langle  S_2^\|(\ell)\right\rangle_{ang} \leq \sup_{|\ell'|\leq |\ell|} \int_{\mathbb{T}^d}  |\delta_{\ell'} u\cdot \hat{\ell'}|^2 \rmd x \leq \sup_{|\ell'|\leq |\ell|} S_2(\ell')\leq C |\ell|^{2s}
\ee
where the Besov regularity $u\in L^2(0,T;B_2^{s,\infty} (\mathbb{T}^d))$ was used in the final inequality.\\

\noindent \textbf{(2) $\implies$ (1)}.   
Let $u\in L^2(0,T;L^2(\mathbb{T}^d))$ be a weak solution of the incompressible Euler equations:
\be\label{weakform}
\int_0^T \int_{\mathbb{T}^d} u \partial_t \varphi \ \rmd x\rmd t + \int_0^T \int_{\mathbb{T}^d} u\otimes u : \nabla \varphi \ \rmd x\rmd t =0.
\ee
where $\varphi := \varphi(x,t) \in C_0^\infty([0,T]\times \mathbb{T}^d)$ is a compactly supported divergence-free test function.  Defining $\varphi_\ell:=\varphi(x-\ell)$. Introducing the increment field $\delta_\ell u := u(x+\ell)-u(x)$ and choosing the test function $\varphi_\ell- \varphi$ in \eqref{weakform} shows that
\be\label{incrementweak}
(\partial_t + u \cdot \nabla )\delta_\ell u= - \nabla_x \delta_\ell p - \delta_\ell u \cdot \nabla_\ell \delta_\ell u +\delta_\ell f
\ee
holds in the sense of distributions.
To obtain this we denote $u'=u(x+\ell)$ and $u=u(x)$ and derive a weak form of the `doubling variables' identity
\begin{align*}
 \int_{\mathbb{T}^d} (u\otimes u : \nabla \varphi_\ell -  u\otimes u : \nabla \varphi) \ \rmd x
&=  \int_{\mathbb{T}^d}  ( \delta_\ell u \otimes  u' - u\otimes   \delta_\ell u) : \nabla \varphi \ \rmd x\\
&=  \int_{\mathbb{T}^d}  ( \delta_{-\ell} u \otimes  u  :\nabla \varphi_\ell - u\otimes   \delta_\ell u: \nabla \varphi)\ \rmd x \\
&=  \int_{\mathbb{T}^d}  ( \delta_{-\ell} u \otimes  u  :\nabla_\ell \varphi_\ell - u\otimes   \delta_\ell u: \nabla \varphi)\ \rmd x \\
&=  \nabla_\ell\cdot \int_{\mathbb{T}^d}   \delta_{\ell} u \otimes u'  \cdot \varphi\ \rmd x -  \int_{\mathbb{T}^d}   u\otimes   \delta_\ell u: \nabla \varphi\ \rmd x \\
&=  \nabla_\ell\cdot \int_{\mathbb{T}^d}   \delta_{\ell} u \otimes \delta_{\ell} u  \cdot \varphi\ \rmd x -  \int_{\mathbb{T}^d}   u\otimes   \delta_\ell u: \nabla \varphi\ \rmd x,
\end{align*}
where we used the fact that $u$ is distributionally divergence-free. This establishes that \eqref{incrementweak} holds in the sense of distributions.  
Dotting the above equation with $\ell$, we find
\be
(\partial_t + u \cdot \nabla)(\delta_\ell u\cdot \ell)=  - \ell \cdot\nabla_x \delta_\ell p - \delta_\ell u \cdot \nabla_\ell  (\delta_\ell u\cdot \ell)+ |\delta_\ell u|^2 + \delta_\ell f.
\ee
Integrating this balance over the torus, we have
\be\label{intbalance}
\int_{\mathbb{T}^d}  |\delta_\ell u|^2 \rmd x=- \nabla_\ell \cdot  \int_{\mathbb{T}^d}\delta_\ell u  (\delta_\ell u\cdot \ell)\ \rmd x,
\ee
where we used only periodicity of the solution fields.
Averaging (in the separation vector $\ell$) equation \eqref{intbalance} over a ball of radius $L$ centered at zero, we find
\be\label{ident1}
\fint_{B_L(0)} \int_{\mathbb{T}^d}  |\delta_\ell u|^2 \rmd x\rmd \ell= \fint_{S^{d-1}}  \left.\left(\int_{\mathbb{T}^d} |\delta_\ell u\cdot\hat{\ell}|^2\ \rmd x\right)\right|_{|\ell|=L}  \rmd \omega(\hat{\ell})
\ee
where $S^{d-1}$ is the unit sphere in $d$-dimensions and $ \rmd \omega$  is the measure on solid angles (unit Haar measure on $S^{d-1}$) and $\fint_{A}:= \frac{1}{|A|} \int_{A}$.
For $p\geq 1$, we define the absolute and longitudinal structure functions to be
\be\label{structurefunctions}
S_p(\ell):=\int_{\mathbb{T}^d}  |\delta_\ell u|^p \rmd x, \qquad S_p^\|(\ell):=\int_{\mathbb{T}^d}  (\delta_\ell u\cdot \hat{\ell})^p \rmd x.
\ee
Introducing the angle-averaging operation
\be
\left\langle f(\ell) \right\rangle_{ang} := \fint_{S^{d-1}} f(\ell) \ \rmd \omega(\hat{\ell}),
\ee
from \eqref{ident1} we have deduced the identity
\be\label{S2ident}
\fint_{B_L(0)}S_2(\ell)\  \rmd \ell = \left\langle  S_2^\|(L\hat{\ell})\right\rangle_{ang}.
\ee

\begin{rem}
In general, the tensor product with $\ell$ yields
\be\nonumber
(\partial_t + u^\nu \cdot \nabla_x)(\delta_\ell u\otimes \ell) +  \delta_\ell u \cdot \nabla_\ell (\delta_\ell u\otimes \ell)  =  \delta_\ell u\otimes \delta_\ell u  - \ell \otimes  \nabla_x \delta_\ell p.
\ee
Thus, integrating over space and separation vectors then yields the tensorial identity
\be\label{zeroident}
 \left\langle \int_{\mathbb{T}^d} (\delta_\ell u\cdot \hat{\ell}) (\delta_\ell u\otimes \hat{\ell})  \rmd x\right \rangle_{ang}=  \int_{B_\ell(0)} \int_{\mathbb{T}^d} (\delta_{\ell'} u\otimes\delta_{\ell'} u)  \rmd x \rmd \ell'.
\ee 
Taking the trace gives the scalar identity \eqref{S2ident}.  We also obtain as a special case in 3D
\be\label{zeroident}
 \left\langle \int_{\mathbb{T}^d} (\delta_\ell u\cdot \hat{\ell}) (\delta_\ell u\times \hat{\ell})  \rmd x\right \rangle_{ang}=0.
\ee 
\end{rem}
Using the identity \eqref{S2ident} together with the assumed bound \eqref{s2bound}, we have
\be
\fint_{B_L(0)}S_2(\ell)\  \rmd \ell \leq C L^{2s}.
\ee
This inequality holds for all $L\geq 0$.  Let $f(\ell):= \| \delta_\ell u\|_{L^2}$ and note that for any $\ell'\in \mathbb{T}^d$, we have
\begin{align*}
|f(\ell)-f(\ell')| &=| \| \delta_\ell u\|_{L^2}-\| \delta_{\ell'} u\|_{L^2}| \leq \|\delta_\ell u - \delta_{\ell'} u\|_{L^2}\\
&= \sqrt{\int_{\mathbb{T}^d} | u(x+\ell) - u(x+ \ell')|^2 \rmd x} = \| \delta_{\ell'-\ell} u\|_{L^2}.
\end{align*}
Thus we have the bound
\be
\fint_{B_L(\ell')} |f(\ell)- f(\ell')|^2\  \rmd \ell \leq \fint_{B_L(\ell')} |f(\ell'-\ell)|^2\  \rmd \ell =  \fint_{B_L(0)} |f(\ell)|^2\  \rmd \ell.
\ee
We conclude that for any $\ell'\in \mathbb{T}^d$ and $L>0$, the following inequality holds
\be
\left( \fint_{B_L(\ell')} |f(\ell)- f(\ell')|\  \rmd \ell\right)^2\leq \fint_{B_L(\ell')} |f(\ell)- f(\ell')|^2\  \rmd \ell\leq   C L^{2s}
 \ee
which follows by Jensen's inequality.
We finally appeal to the following basic fact
\begin{lemma}\label{lemHolder}
Assume that there exists $C>0$ and $\alpha \in (0,1]$ such that for every $x_0\in \mathbb{T}^d$ and $r>0$,
\be
\frac{1}{|B_r(x_0)|} \int_{B_r(x_0)} |f(x)- f(x_0)| \  \rmd x < C r^\alpha.
\ee
Then $f$ is H\"{o}lder continuous with exponent $\alpha$.
\end{lemma}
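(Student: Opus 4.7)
The plan is to establish the estimate via a Campanato-type averaging argument. The central object I will use is the ball average $A_r(x_0) := \fint_{B_r(x_0)} f(z)\, \rmd z$. My first observation is that, by pulling the constant $f(x_0)$ inside the integral and applying the hypothesis directly, one gets
\[
|A_r(x_0) - f(x_0)| \;\leq\; \fint_{B_r(x_0)} |f(z) - f(x_0)|\, \rmd z \;<\; C r^\alpha .
\]
In particular, $A_r(x_0) \to f(x_0)$ as $r \to 0$ at every point, so the hypothesis is a quantitative version of Lebesgue differentiation holding at every (not just almost every) point of $\mathbb{T}^d$.

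To compare the values of $f$ at two distinct points $x$ and $y$, I would set $r := |x-y|$, let $x_0 := (x+y)/2$ be their midpoint, and compare both $f(x)$ and $f(y)$ to the common reference average $A_r(x_0)$. The crucial geometric step is the enclosing-ball observation: $x, y \in B_{r/2}(x_0) \subset B_r(x_0)$, and moreover $B_r(x_0) \subset B_{2r}(x)$ (and symmetrically $B_r(x_0) \subset B_{2r}(y)$), since any $z$ with $|z - x_0| < r$ satisfies $|z - x| \leq |z - x_0| + |x_0 - x| < r + r/2 \leq 2r$. This allows me to enlarge the domain of averaging at the cost of only the dimensional factor $|B_{2r}(x)|/|B_r(x_0)| = 2^d$, so that the hypothesis applied at $x$ with radius $2r$ gives
\[
|f(x) - A_r(x_0)| \leq \fint_{B_r(x_0)} |f(x) - f(z)|\, \rmd z \leq 2^d \fint_{B_{2r}(x)} |f(x) - f(z)|\, \rmd z \leq 2^d C (2r)^\alpha .
\]
The identical bound holds with $x$ replaced by $y$, and a triangle inequality through $A_r(x_0)$ then yields $|f(x) - f(y)| \leq 2^{d+1+\alpha} C\, |x-y|^\alpha$, the claimed H\"{o}lder estimate.

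I do not anticipate any serious obstacles; the argument is essentially routine once the enclosing-ball trick is identified. The one substantive point is that the hypothesis controls averages only when centered at the specific point $x_0$ whose value is being subtracted, whereas H\"{o}lder continuity is inherently a two-point statement. Passing through the midpoint ball $B_r(x_0)$ and then enlarging to $B_{2r}(x)$ (respectively $B_{2r}(y)$) is the device that bridges this gap, and the factor $2^d$ lost in the enlargement is harmlessly absorbed into the final H\"{o}lder constant. Notably, no dyadic telescoping across scales is needed, since the hypothesis already delivers the correct scaling $r^\alpha$ in a single step at the chosen radius.
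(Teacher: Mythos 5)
Your proof is correct and follows essentially the same route as the paper's: both arguments enlarge balls around the two points to radius $2r$ so that they share a common region, apply the hypothesis at that radius, and absorb the volume ratio $2^d$ into the H\"older constant. The only cosmetic difference is that you pass the triangle inequality through the midpoint ball average $A_r\bigl((x+y)/2\bigr)$, whereas the paper integrates the pointwise triangle inequality $|f(x_0)-f(y_0)|\le |f(x)-f(x_0)|+|f(x)-f(y_0)|$ over the common ball $B_r(x_0)$ directly.
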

\begin{proof}
Let $x_0, y_0\in \T^d$ and set $r=|x_0-y_0|$. Then we have $B(y_0, r)\subset B(x_0, 2r)$. Thus,
\[
\begin{aligned}
2Cr^\alpha &\ge \frac{1}{|B_{2r}(x_0)|} \int_{B_{2r}(x_0)} |f(x)- f(x_0)|dx+\frac{1}{|B_{2r}(y_0)|} \int_{B_{2r}(y_0)} |f(x)- f(y_0)|dx\\
&= (c_0(2r)^d)^{-1} \left\{\int_{B_{2r}(x_0)} |f(x)- f(x_0)|dx+ \int_{B_{2r}(y_0)} |f(x)- f(y_0)|dx\right\}\\
&\ge (c_0 (2r)^d)^{-1}\int_{B_r(x_0)} |f(x)- f(x_0)|+|f(x)-f(y_0)|dx\\
&\ge (c_0 (2r)^d)^{-1}\int_{B_r(x_0)} |f(x_0)-f(y_0)|dx=2^{-d}|f(x_0)-f(y_0)|
\end{aligned}
\]
where $c_0$ denotes the volume of the unit ball.  It follows that 
\[
|f(x_0)-f(y_0)|\le 2^{d+1}C|x_0-y_0|^\alpha
\]
for any $x_0, y_0\in \T^d$. This  completes the proof. 
\end{proof}
\noindent Lemma \ref{lemHolder} allows us to conclude that $S_2(\ell)$ is H\"{o}lder continuous in $\ell$ with exponent $2s$ and
\be\nonumber
\|u\|_{L^2(0,T;L^2(\mathbb{T}^d))}<C_0, \ \ \  \| \delta_\ell u\|_{L^2(0,T;L^2(\mathbb{T}^d))} \leq C_1 |\ell|^s \ \ \implies \ \ \|u\|_{L^2(0,T; B_2^{s,\infty}(\mathbb{T}^d))} \leq C_2.
\ee
\end{proof}

\begin{proof}[Proof of Theorem \ref{theorem}(a)] By Jensen's inequality, we have
\begin{align}
\left(\frac{1}{|\ell|^{2(1-\alpha)/3}}\int_0^T\left\langle  S_2^\|(\ell)\right\rangle_{ang}  \rmd t\right)^{3/2} &\leq \frac{1}{|\ell|^{1-\alpha}} \int_0^T \left\langle\int_{\mathbb{T}^d}  | \delta_{\ell} u \cdot \hat{\ell}|^3\rmd x\right\rangle_{ang} \rmd t.
\end{align}
Now, for any $\epsilon>0$, we can choose $\ell_\epsilon$ sufficiently small such that for all $\ell\leq \ell_\epsilon$, we have
\begin{align}\nonumber
\left(\frac{1}{|\ell|^{2(1-\alpha)/3}}\int_0^T\left\langle  S_2^\|(\ell)\right\rangle_{ang}  \rmd t\right)^{3/2} &  \leq 
 \sup_{\ell\leq \ell_\epsilon} \frac{1}{ |\ell|^{1-\alpha}} \int_0^T\int_{\mathbb{T}^d}   \left\langle  | \delta_\ell u \cdot \hat{\ell}|^3\right\rangle_{ang}\rmd x \rmd t \\ \nonumber
&\leq \limsup_{|\ell|\to 0} \frac{1}{ |\ell|^{1-\alpha}} \int_0^T\int_{\mathbb{T}^d}   \left\langle  | \delta_\ell u \cdot \hat{\ell}|^3\right\rangle_{ang}\rmd x \rmd t +\epsilon \\
& \leq -\int_0^T \int_{\mathbb{T}^d} D_{4/5}^*[u] \rmd x \rmd t + \epsilon,
\end{align}
which results from Hypothesis \ref{hypothesis}(a).   It follows from Hypothesis \ref{hyp45thsLaw} and Proposition \ref{prop1} that
\begin{align}\nonumber
\left(\frac{1}{|\ell|^{2(1-\alpha)/3}}\int_0^T\left\langle  S_2^\|(\ell)\right\rangle_{ang}  \rmd t\right)^{3/2}
& \leq \frac{12}{d(d+2)}\int_0^T \int_{\mathbb{T}^d} D[u] \rmd x \rmd t + \epsilon.
\end{align}
 Since $f(x)= x^{2/3}$, $x>0$ is monotone increasing, we have  for all $  \ell$ sufficiently small that
\be
\int_0^T\left\langle  S_2^\|(\ell)\right\rangle_{ang}  \rmd t \leq C_0 |\ell|^{2(1-\alpha)/3},
\ee
where  $C_0:=C_0(d,T, u_0, f)$ depends on magnitude of the inertial energy dissipation (correspondingly the anomalous viscous dissipation if the Euler solution is obtained as a vanishing viscosity limit).  The claimed regularity follows by applying Lemma \ref{prop1}.
\end{proof}

\begin{proof}[Proof of Theorem \ref{theorem}(b)] 
We employ the following definition of Besov spaces.  Fix $\ell>0$ and consider the ``truncated ball" $B_T:=\{n:1/2<|n|<1\}$.  The truncated-ball mean is then
 \be
 (\cV_\ell f)(x):= \frac{1}{|B_T|} \int_{B_T} \!\! \delta f(\ell n;x) \ \rmd n.
 \ee
The norm for the Besov space $B_p^{s,\infty}(\mathbb{T}^d)$ can then be defined as
\be
  \|f\|_{B_p^{s,\infty}} := \|f\|_p + |f|_{B_p^{s,\infty}}, \qquad  |f|_{B_p^{s,\infty}} :=  \sup_{N\geq 0} 2^{sN} \| \cV_{2^{-N}} f\|_p.
\ee
See Appendix C of \cite{E96} and Section 2.5.11--12 of \cite{HT83}.   Our aim is to obtain a non-trivial  $L^3$--integrable upper bound for the Besov norm $ \|u(t)\|_{B_3^{s,\infty}}$ under for some $s>0$.   By assumption  $u\in L^3(0,T;L^3({\mathbb T}^d))$, so we need only to find an integrable upper bound for the Besov semi-norm $|u(t)|_{B_3^{s,\infty}}^3$.
 First, by Jensen's inequality, the $p$th-power of ball-averages is bounded by the ball-average of the $p$th power:
\be\label{jensen}
\| \cV_{\ell} f\|_p^p \leq  \frac{1}{|B_T|} \int_{B_T}  \| \delta f(\ell n;\cdot)\|_p^p \ \rmd n, \qquad p\geq 1.
\ee
Thus,  letting $\ell_N:= 2^{-N}$, we have
\bea 
 |u(t)|_{B_3^{s,\infty}}^3&=& \  \Big(\sup_{N\geq 0} \ell_N^{-s} \| \cV_{2^{-N}} f\|_3\Big)^3 \nonumber
 \\ &\leq& \  \sup_{N\geq 0} \ell_N^{-3s}\frac{1}{|B_T|} \int_{B_T}  \| \delta u(\ell_N n;\cdot,t)\|_3^3 \  \rmd n\nonumber\\
 &=&   \   \sup_{N\geq 0} \ell_N^{-3s} \frac{\ell_N^{-d}}{|B_T|}   \int_{\ell_{N-1}}^{\ell_N} \rmd \rho\ \rho^{d-1} \int_{S^{d-1}}   \|\delta u(\rho \hat{r};\cdot,t)\|_3^3 \ \rmd\omega(\hat{\ell})
 \eea
where $\rho=|r|$ and where we have used the upper bound \eqref{jensen}.  
Fix any scale $\ell_0$ and split into small and large
 \bea\nonumber
 |u(t)|_{B_3^{s,\infty}}^3
  &\leq &      \sup_{|\ell| \leq  \ell_0} \ell^{-3s } \int_{\mathbb{T}^d}   \left\langle  | \delta_\ell u(x,t) |^3\right\rangle_{ang}\rmd x +     \sup_{|\ell| >  \ell_0} \ell^{-3s } \int_{\mathbb{T}^d}   \left\langle  | \delta_\ell u(x,t) |^3\right\rangle_{ang}\rmd x \\
    &\leq &      \sup_{|\ell| \leq  \ell_0} \ell^{-3s } \int_{\mathbb{T}^d}   \left\langle  | \delta_\ell u(x,t) |^3\right\rangle_{ang}\rmd x +     2\ell_0^{-3s } \| u\|_{L^3}^3.
\eea
For $\beta\in (0,1)$, we have
  \bea \nonumber
 \int_0^T |u(t)|_{B_3^{s,\infty}}^3\rmd t
  \leq       \sup_{|\ell| \leq  \ell_0} \ell^{(1-\beta)-3s } \left(\frac{1}{|\ell|^{1-\beta}}  \int_0^T\int_{\mathbb{T}^d}   \left\langle  | \delta_\ell u(x,t) |^3\right\rangle_{ang}\rmd x \rmd t\right)+   2\ell_0^{-3s } \| u\|_{L^3(0,T;L^3(\mathbb{T}^d))}^3.
\eea
It follows that for any $s\leq (1-\beta)/3$, we have for any $\epsilon>0$ that there exists an $\ell_\epsilon$ such that for all $\ell_0\leq \ell_\epsilon$, 
  \bea \nonumber
 \int_0^T |u(t)|_{B_3^{s,\infty}}^3\rmd t
 &\leq&        \ell_0^{(1-\beta)-3s }   \sup_{|\ell| \leq  \ell_0}  \frac{1}{|\ell|^{1-\beta}}  \int_0^T\int_{\mathbb{T}^d}   \left\langle  | \delta_\ell u(x,t) |^3\right\rangle_{ang}\rmd x \rmd t +     2\ell_0^{-3s } \| u\|_{L^3(0,T;L^3(\mathbb{T}^d))}^3 \\\nonumber
  &\leq&        \ell_0^{-3s }   \left( \limsup_{|\ell| \to 0} \frac{1}{|\ell|^{1-\beta}}  \int_0^T\int_{\mathbb{T}^d}   \left\langle  | \delta_\ell u(x,t) |^3\right\rangle_{ang}\rmd x \rmd t+ \epsilon +   2 \| u\|_{L^3(0,T;L^3(\mathbb{T}^d))}^3\right)\\ \nonumber
  &\leq&       \ell_0^{-3s } \left(-\int_0^T \int_{\mathbb{T}^d} D_{4/3}^*[u] \rmd x \rmd t +\epsilon + 2 \| u\|_{L^3(0,T;L^3(\mathbb{T}^d))}^3 \right)\\
    &=&     \ell_0^{-3s }\left( \frac{4}{d}   \int_0^T \int_{\mathbb{T}^d} D[u] \rmd x \rmd t +\epsilon + 2 \| u\|_{L^3(0,T;L^3(\mathbb{T}^d))}^3 \right)
\eea
where we used the fact that the distributional limit exists by Hypothesis \ref{hyp45thsLaw} and employed our main Hypothesis \ref{hypothesis}(b) in passing to the second to last line. We remark that the bound on the Besov semi-norm depends on magnitude of the inertial (or anomalous) dissipation.
 \end{proof}

 \subsection*{Acknowledgments} I am enormously grateful to K. Iyer for providing the Figures \ref{figure1}, \ref{figure2} and \ref{figure3}, as well as for enlightening discussions. I would also like to thank P. K. Yeung for the DNS data, which used the Extreme Science and Engineering Discovery Environment (XSEDE) resource Stampede2 at the Texas Advanced Computing Center through allocation PHY200084.
  I am grateful to G. L. Eyink for many useful conversations; it was he who suggested that there should be a connection between the 4/5--law and self-regularization. I thank also P. Constantin, H. Q. Nguyen and V. Vicol for insightful discussion.  This research was partially supported by NSF-DMS grant 2106233.


\begin{thebibliography}{99}

\bibitem{A84} 
Anselmet, F., Gagne, Y., Hopfinger, E. J., and Antonia, R. A. (1984). High-order velocity structure functions in turbulent shear flows. Journal of Fluid Mechanics, 140, 63-89.


\bibitem{BBPS19}
 Bedrossian, J., Blumenthal, A., and Punshon-Smith, S.: The Batchelor spectrum of passive
scalar turbulence in stochastic fluid mechanics, Comm. Pure Appl. Math., 2021.


\bibitem{B18} 
Bedrossian, J., Zelati, M. C., Punshon-Smith, S., and Weber, F.: A sufficient condition for the Kolmogorov 4/5 law for stationary martingale solutions to the 3D Navier–Stokes equations. Comm. Math Phys 367.3 (2019): 1045-1075.


\bibitem{B93} 
Benzi, R., Ciliberto, S., Tripiccione, R., Baudet, C., Massaioli, F., and Succi, S. (1993). Extended self-similarity in turbulent flows. Physical review E, 48(1), R29.

\bibitem{BGK98}
Bernard, D.,  Gaw\c{e}dzki, K., and  Kupiainen, A.: Slow modes in passive advection. J. Statist.
Phys., 90(3-4):519--569, 1998. doi:10.1023/A:1023212600779.

\bibitem{BO95} 
Borue, V. and Orszag, S.~A.:
Self-similar decay of three-dimensional homogeneous turbulence with hyperviscosity
Phys. Rev. E {\bf 51}: R856--R859, (1995) 

\bibitem{BLSV17}
Buckmaster, T., De Lellis, C., Sz\'ekelyhidi Jr., L. and Vicol, V.: 
Onsager’s conjecture for admissible weak solutions. Comm. Pure Appl. Math., 72(2):227–448, 2019.
\bibitem{BV19}
Buckmaster, T., and  Vicol, V. Convex integration and phenomenologies in turbulence. EMS Surveys in Mathematical Sciences 6.1 (2020): 173-263.

\bibitem{CCFS}
Cheskidov, A., Constantin, P., Friedlander, S., and Shvydkoy, R. (2008). Energy conservation and Onsager's conjecture for the Euler equations. Nonlinearity, 21(6), 1233.


\bibitem{CFP09}
Cheskidov, A., and  Friedlander, S. and  Pavlovi\'{c}, N.: Inviscid dyadic model of turbulence: the fixed point and Onsager's conjecture. Journal of Mathematical Physics 48.6 (2007): 065503.

\bibitem{CZ16} 
Cheskidov, A., and Zaya, K.. Regularizing effect of the forward energy cascade in the inviscid dyadic model. Proceedings of the American Mathematical Society 144.1 (2016): 73-85.

\bibitem{CET94}
Constantin, P., E, W., and Titi, E.: Onsager's conjecture on the energy conservation for solutions of Euler's equation. Comm. Math. Phys. {\bf{165}}: 207-209,  (1994)


\bibitem{LS12}
De Lellis, C. and Sz\'ekelyhidi Jr., L.:  The $h$-principle and the equations of fluid dynamics,
B. Am. Math. Soc. {\bf 49}:  347--375,  (2012)


\bibitem{DR00}
Duchon, J. and Robert, R.: Inertial energy dissipation for weak solutions of incompressible Euler and Navier-Stokes equations. Nonlinearity {\bf{13.1}}: 249--255, (2000)

\bibitem{D19}
Drivas, T. D. Turbulent cascade direction and Lagrangian time-asymmetry. Journal of Nonlinear Science 29.1 (2019): 65-88.

\bibitem{Dshock21}
Buckmaster, T., Drivas, T. D., Shkoller, S., \& Vicol, V. (2021). Simultaneous development of shocks and cusps for 2D Euler with azimuthal symmetry from smooth data. arXiv preprint arXiv:2106.02143.


\bibitem{DEIJ19}
Drivas, T. D., Elgindi, T. M., Iyer, G., and Jeong, I. J. (2019). Anomalous Dissipation in Passive Scalar Transport. arXiv preprint arXiv:1911.03271.

\bibitem{DE19} 
Drivas, T. D., and Eyink, G. L. (2019). An Onsager singularity theorem for Leray solutions of incompressible Navier--Stokes. Nonlinearity, 32(11), 4465.

\bibitem{DE18} 
Drivas, T. D., and Eyink, G. L. (2018). An Onsager singularity theorem for turbulent solutions of compressible Euler equations.  Comm. Math. Phys.  359.2: 733-763.



\bibitem{DN18} Drivas, T. D., and Nguyen, H. Q.: Remarks on the emergence of weak Euler solutions in the vanishing viscosity limit. Journal of Nonlinear Science (2018): 1-13.


\bibitem{KMS00}
E, W.  Khanin, K., Mazel, A.  and Sinai, Y.: Invariant Measures for Burgers Equation with Stochastic Forcing, Ann. Math. 151 (3): 817-960 (2000)

\bibitem{GLE94} 
Eyink, G.~L.: Energy dissipation without viscosity in ideal hydrodynamics I. Fourier analysis and local energy transfer, Physica D {\bf 78}: 222--240,  (1994)


\bibitem{E02}
Eyink, G. L.: Local 4/5-law and energy dissipation anomaly in turbulence. Nonlinearity 16.1 (2002): 137.


\bibitem{ED15}
Eyink, G. L., and Drivas, T.D.: Spontaneous stochasticity and anomalous dissipation for Burgers equation. Journal of Statistical Physics 158.2 (2015): 386-432.

\bibitem{E96} 
Eyink, G. L.: Exact results on stationary turbulence in 2D: consequences of vorticity conservation. Physica D 91.1 (1996): 97-142.

\bibitem{FGV16}
Friedlander, S.,  Glatt-Holtz, N., and  Vicol, V.: Inviscid limits for a stochastically forced shell model of turbulent flow. Annales de l'Institut Henri Poincar\'{e}, Probabilit\'{e}s et Statistiques. Vol. 52. No. 3. Institut Henri Poincar\'{e}, 2016.

\bibitem{F95}
Frisch, U.: Turbulence: the legacy of AN Kolmogorov. Cambridge University Press, 1995.

\bibitem{GP11} 
Golse, F., and Perthame, B.. Optimal regularizing effect for scalar conservation laws. Revista Matemática Iberoamericana 29.4 (2013): 1477-1504.

\bibitem{I17}
Isett, P. (2017). On the endpoint regularity in Onsager's conjecture. arXiv preprint arXiv:1706.01549.

\bibitem{I18}
Isett, P.:  A proof of Onsager's conjecture. Annals of Mathematics, 188(3), 871-963, (2018).


 \bibitem{ISY20} 
Iyer, K. P., Sreenivasan, K. R. and Yeung, P. K.: Scaling exponents saturate in three-dimensional isotropic turbulence. Physical Review Fluids 5.5 (2020): 054605.



\bibitem{J09} 
Jabin, P-E. Some regularizing methods for transport equations and the regularity of solutions to scalar conservation laws. S\'{e}minaire \'{E}quations aux d\'{e}riv\'{e}es partielles (Polytechnique) 2008 (2009): 1-15.

\bibitem{KIYIU03}
Kaneda, Y., Ishihara, T., Yokokawa, M., Itakura, K.,  and Uno, A.: 
Energy dissipation rate and energy spectrum in high resolution direct numerical simulations 
of turbulence in a periodic box, Phys. Fluids {\bf 15}: L21--L24,  (2003)



\bibitem{K41}
Kolmogorov, A. N.: Dissipation of energy in locally isotropic turbulence. Dokl. Akad. Nauk SSSR. Vol. 32. No. 1. 1941.

\bibitem{LJR02}
Le Jan, Y.  and Raimond, O.: Integration of Brownian vector fields. Ann. Probab., 30(2):826--873, 2002. doi:10.1214/aop/1023481009.
2004. doi:10.1214/009117904000000207.


\bibitem{L34}
Leray, J.: Sur le mouvement d'un liquide visqueux emplissant l'espace, Acta Math. {\bf 63}: 193--248, (1934)




\bibitem{MSV07}
Mattingly, J. C.,  Suidan, T., and Vanden-Eijnden., E: Simple systems with anomalous dissipation and energy cascade. Communications in Mathematical Physics 276.1 (2007): 189-220.

\bibitem{AM16b}
Mailybaev, A. A.:  Spontaneously stochastic solutions in one-dimensional inviscid systems. Nonlinearity 29.8 (2016): 2238.


\bibitem{MS91}
Meneveau, C., and Sreenivasan, K. R. (1991). The multifractal nature of turbulent energy dissipation. Journal of Fluid Mechanics, 224, 429-484.

\bibitem{O49} Onsager, L.: Statistical hydrodynamics. Il Nuovo Cimento (Supplemento), {\bf{6}}: 279-287 (1949)




\bibitem{PKW02}
 Pearson, B.~R., Krogstad, P.~A., and van de Water, W.:
Measurements of the turbulent energy dissipation rate, Phys. Fluids {\bf 14}: 1288--1290,  (2002)



\bibitem{Sch93}
Scheffer, V.: An inviscid flow with compact support in space-time. J. Geom. Anal., 3(4):343--401, 1993.
\bibitem{Shn97}
Shnirelman, A.: On the nonuniqueness of weak solution of the Euler equation. Comm. Pure Appl. Math.,
50(12):1261--1286, 1997.


\bibitem{S09}
Shvydkoy, R. (2009). On the energy of inviscid singular flows. Journal of mathematical analysis and applications, 349(2), 583-595.


\bibitem{SSJ93} 
Stolovitzky, G., Sreenivasan, K. R., and Juneja, A. (1993). Scaling functions and scaling exponents in turbulence. Physical Review E, 48(5), R3217.

\bibitem{S93} 
Stolovitzky, G., and Sreenivasan, K. R. (1993). Scaling of structure functions. Physical Review E, 48(1), R33.

\bibitem{S96} 
Sreenivasan, K. R., Vainshtein, S. I., Bhiladvala, R., San Gil, I., Chen, S., and Cao, N. (1996). Asymmetry of velocity increments in fully developed turbulence and the scaling of low-order moments. Physical review letters, 77(8), 1488.

\bibitem{KRS84} 
Sreenivasan, K.~R.: On the scaling of the turbulence energy dissipation rate, 
Phys. Fluids {\bf 27}: 1048--1051,  (1984)

\bibitem{KRS98}
Sreenivasan, K.~R.: An update on the energy dissipation rate in isotropic turbulence, 
Phys. Fluids {\bf 10}: 528--529,  (1998)

\bibitem{TT07} 
Tadmor, E., and Tao, T.: Velocity averaging, kinetic formulations, and regularizing effects in quasi--linear PDEs. Communications on Pure and Applied Mathematics: A Journal Issued by the Courant Institute of Mathematical Sciences 60.10 (2007): 1488-1521.

\bibitem{TBS02}
Touil, H., Bertoglio,  J.-P. and Shao, L.:
The decay of turbulence in a bounded domain, J. Turb. {\bf 3}: N49, (2002)


\bibitem{HT83} 
 Treibel,  H.: Theory of Function Spaces, Monographs in Mathematics, Vol. 78 (Birkh\"{a}iuser, Basel, 1983).
 
 \bibitem{S94} 
Vainshtein, S. I., and Sreenivasan, K. R. (1994). Kolmogorov's $4/5$th Law and Intermittency in Turbulence. Physical review letters, 73(23), 3085.

\end{thebibliography}
\end{document}